\newtheorem{theorem}{Theorem}
\newtheorem{proposition}[theorem]{Proposition}
\newtheorem{corollary}[theorem]{Corollary}
\newtheorem{lemma}[theorem]{Lemma}
\newtheorem{remark}[theorem]{Remark}
\newenvironment{proof}[1][Proof]{\textbf{#1.} }{\ \rule{0.5em}{0.5em}}
\begin{document}

\title{Form factors of the $O(6)$ Gross Neveu-model}
\author{Hrachya M. Babujian\thanks{Address: Yerevan Physics Institute, Alikhanian
Brothers 2, Yerevan, 375036 Armenia} \thanks{E-mail: babujian@yerphi.am} ,
Angela Foerster\thanks{Address: Instituto de F\'{\i}sica da UFRGS, Av. Bento
Gon\c{c}alves 9500, Porto Alegre, RS - Brazil} \thanks{E-mail:
angela@if.ufrgs.br} , and Michael Karowski\thanks{E-mail:
karowski@physik.fu-berlin.de}\\Institut f\"{u}r Theoretische Physik, Freie Universit\"{a}t Berlin,\\Arnimallee 14, 14195 Berlin, Germany }
\date{\today\\
[1cm]{\textit{In memory of Petr Kulish}}}
\maketitle

\begin{abstract}
The isomorphism $SU(4)\simeq O(6)$ is used to construct the form factors of
the $O(6)$ Gross-Neveu model as bound state form factors of the $SU(4)$ chiral
Gross-Neveu model. This technique is generalized and is then applied to use
the $O(6)$ as the starting point of the nesting procedure to obtain the $O(N)$
form factors for general even $N$.

\end{abstract}
\tableofcontents

\section{Introduction}

In the last decades integrable quantum field theories in 1+1 dimensions have
been investigated very intensively. One of the pioneers was Petr Kulish: An
infinite set of conservation laws for the sine-Gordon and the massive Thirring
model was derived by Kulish and Nissimov in \cite{KN} (see also \cite{AKNP}).
In \cite{Ku}, P. Kulish has shown that these conservation laws imply the
factorization of the S-matrix. He also made a seminal contribution in the
algebraic formulation of the nested Bethe ansatz: in \cite{KR} Kulish and
Reshetikhin constructed the nested version of the algebraic Bethe ansatz for a
$GL(N)$ invariant model. The \textquotedblleft off-shell\textquotedblright
version of this nested algebraic Bethe ansatz was later developed in
\cite{BKZ2} to solve matrix difference equations. This technique was applied
in \cite{BFK1,BFK2,BFK3} to construct form factors for the $SU(N)$ chiral
Gross-Neveu model.

In a previous paper \cite{BFK6} we constructed the $O(N)$ nested Bethe ansatz,
which needs deeper investigations. We introduced an intertwiner, which
connects two different S-matrices in the nesting procedure $S(\theta,N)$ and
$S(\theta,N-2)$. Then we applied this technique in \cite{BFK7} and \cite{BFK8}
to the $O(N)$ nonlinear $\sigma$-model and the $O(N)$ Gross-Neveu model with
even $N$, respectively. In the present article we will consider the form
factors of the $O(6)$ Gross-Neveu model which will be the starting model for
the nesting procedure for the $O(N)$ Gross-Neveu model. The $O(4)$
Gross-Neveu-Model will be considered in forthcoming papers.

Our results are related to the $N=4$ supersymmetric Yang-Mills (SYM) theory.
It is known that the $O(6)$ or $SU(4)$ Bethe ansatz structure is connected to
the $N=4$ SYM theory, which, in turn, is equivalent by the AdS/CFT conjecture
to the super-string theory on the product space $AdS_{5}\times S_{5}$. This
equivalence means that there is a one-to-one correspondence between all
aspects of the theories including the global symmetry observables and the
field content with correlation functions. In the $N=4$ SYM theory there is an
automorphism symmetry group of the supersymmetry algebra known as R-symmetry,
which causes the supercharges to change by a phase rotation. Thus for the
$N=4$ SYM theory the R-symmetry group is $SU(4)\simeq O(6)$. This group is
part of the full group of symmetry of the theory known as superconformal group
and is given by $S(2,2\,|\,4)$ which also includes the conformal subgroup
$SO(2,4)$ and Poincare supersymmetry \cite{Mal,KZar}. Therefore all integrable
structures associated with $SU(4)\simeq O(6)$ group are interesting tools for
this big AdS/CFT correspondence conjecture.

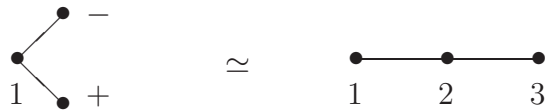
\begin{figure}[th]%
\[%
\begin{array}
[c]{c}%
\unitlength=6mm\begin{picture}(2,2)(8,0) \put(8,1){\makebox(0,0){$\bullet$}} \put(8,1){\line(1,1){1}} \put(8,1){\line(1,-1){1}} \multiput(9,0)(0,2){2}{\makebox(0,0){$\bullet$}} \put(7.8,0){1} \put(9.5,0){$+$} \put(9.5,1.8){$-$} \end{picture}
\end{array}
\qquad~~~\simeq~~\qquad%
\begin{array}
[c]{c}%
\unitlength=6mm\begin{picture}(4,2)(0,0) \put(0,1){\line(1,0){4}} \multiput(0,1)(2,0){3}{\makebox(0,0){$\bullet$}} \put(-0.2,0){1} \put(1.8,0){2} \put(3.8,0){3} \end{picture}
\end{array}
~
\]
\caption{The isomorphism $0(6)\simeq SU(4)$ in terms of the Dynkin diagrams}%
\label{f1}%
\end{figure}In \cite{KT1} was shown that the isomorphism $O(6)\simeq SU(4)$,
cf. Fig. \ref{f1}, leads to an identity between the $O(6)$ Gross-Neveu model
and the $SU(4)$ chiral Gross-Neveu model. The four right-handed (left-handed)
$O(6)$ kinks correspond to the four fundamental $SU(4)$ particles
(antiparticles). The six fundamental $O(6)$ particles correspond to the six
$SU(4)$ bound states. In \cite{KT1} the isomorphism was shown for the
S-matrices. In this article we demonstrate the isomorphism for the form factors.

In \cite{BFK7} and \cite{BFK8} we constructed form factors for the
$O(N)~\sigma$-model and the $O(N)$ Gross-Neveu model (for $N$ even),
respectively. For these constructions we used the nested Bethe ansatz, which
means that for the level $N$ one needs the results from level $N-2$, etc. In
\cite{BFK7} we used the isomorphism $O(4)\simeq SU(2)\times SU(2)$ as the
starting point of the nesting procedure for the $O(N)~\sigma$-model. The
$SU(N)$ form factors were constructed in \cite{BFK0,BFK1,BFK2,BFK3,BFK4}. The
results of the present article, which relay on the isomorphism $O(6)\simeq
SU(4)$ may serve as the starting point of the nesting procedure for the $O(N)$
Gross-Neveu model.

The article is organized as follows. In Section \ref{s2} we recall some
results on the needed S-matrices, in particular the bound state procedure. In
Section \ref{s3} we recall results on the $SU(4)$ and $O(6)$ form factors. We
show that the form factors for $O(6)$ vector particles are to be identified
with bound state $SU(4)$ form factors. In Section \ref{s4} we apply these
results to some examples. In Section \ref{s5} we generalize the results to the
so called `modified form factors'. We prove that they can be used to start the
nested `off-shell' Bethe ansatz to solve the $O(N)$ form factor equations. The
appendix provides the more complicated proofs of the results we have obtained
and further explicit calculations.

\section{S-matrix}

\label{s2}

\subsection{The $SU(4)$ S-matrix}

The S-matrix of the $SU(4)$ chiral Gross-Neveu model for the scattering of two
fundamental particles (transforming as the $SU(4)$ vector representation) is
given by \cite{BKKW,BW,KKS,KT1,BFK1}%
\begin{equation}
S^{SU(4)}(x)=b^{SU(4)}(\theta)\,\mathbf{1}+c^{SU(4)}(\theta)\,\mathbf{P}
\label{S4}%
\end{equation}
or in terms of the components
\[
\left(  S^{SU(4)}\right)  _{AB}^{DC}(x)=b^{SU(4)}(\theta)\,\delta_{A}%
^{C}\delta_{B}^{D}+c^{SU(4)}(\theta)\,\delta_{A}^{D}\delta_{B}^{C}\,=%
\begin{array}
[c]{c}%
\unitlength3mm\begin{picture}(6,6) \put(1,1){\line(1,1){4}} \put(5,1){\line(-1,1){4}} \put(.5,0){$A$} \put(5,0){$B$} \put(5,5.4){$C$} \put(.5,5.4){$D$} \put(1.8,.7){$\theta_1$} \put(3.5,.7){$\theta_2$} \end{picture}
\end{array}
\]
with the rapidity difference of the particles $\theta=\theta_{12}=\theta
_{1}-\theta_{2}$. The two S-matrix eigenvalues are $S_{\pm}^{SU(4)}%
=b^{SU(4)}\pm c^{SU(4)}$ with%
\begin{equation}
\left(  S_{+}^{SU(4)},S_{-}^{SU(4)}\right)  =\left(  \frac{\theta-\frac{1}%
{2}i\pi}{\theta+\frac{1}{2}i\pi},1\right)  S_{-}^{SU(4)}\,. \label{Ev4}%
\end{equation}
Unitarity writes as%
\[
S_{+,-}^{SU(4)}(-\theta)S_{+,-}^{SU(4)}(\theta)=1\,.
\]
The highest weight amplitude
\begin{equation}
a^{SU(4)}(\theta)=S_{+}^{SU(4)}(\theta)=-\frac{\Gamma\left(  1-\frac{1}%
{2}\frac{\theta}{i\pi}\right)  \Gamma\left(  \frac{3}{4}+\frac{1}{2}%
\frac{\theta}{i\pi}\right)  }{\Gamma\left(  1+\frac{1}{2}\frac{\theta}{i\pi
}\right)  \Gamma\left(  \frac{3}{4}-\frac{1}{2}\frac{\theta}{i\pi}\right)  }
\label{a4}%
\end{equation}
is essential for the Bethe ansatz which will be used to construct the form
factors. In order to simplify the formulae we extract the factor
$a^{SU(4)}(\theta)$ from the S-matrix and define%
\begin{equation}
\tilde{S}^{SU(4)}(\theta)=S^{SU(4)}(\theta)/a^{SU(4)}(\theta)=\tilde
{b}^{SU(4)}(\theta)\mathbf{1}+\tilde{c}^{SU(4)}(\theta)\mathbf{P} \label{St4}%
\end{equation}
with%
\begin{equation}
\tilde{b}^{SU(4)}(\theta)=\frac{\theta}{\theta-\frac{1}{2}i\pi}\,,~~\tilde
{c}^{SU(4)}(\theta)=\frac{-\frac{1}{2}i\pi}{\theta-\frac{1}{2}i\pi
}\,.\nonumber
\end{equation}

The S-matrix eigenvalue $S_{-}^{SU(4)}(\theta)$ has a pole at $\theta=\frac
{1}{2}i\pi$ which means that there exist a bound state of 2 fundamental
particles, which transforms as an $SU(4)$ anti-symmetric tensor. This have to
identified with a fundamental particle of the $O(6)$ model (see below). The
bound states of 3 fundamental particles $(ABC)$ (with $1\leq A<B<C\leq4$) is
to be identified with an anti-particle of a fundamental particle
$D:(ABC)=\bar{D}$ \cite{KuS,KKS,BFK1}. The charge conjugation matrix is
\begin{equation}
\mathbf{C}_{(ABC)D}=\epsilon_{ABCD}\label{C4}%
\end{equation}
where $\epsilon_{ABCD}$ is total anti-symmetric and $\epsilon_{1234}=1$.

\subsection{The $O(6)$ S-matrix}

The $O(6)$ Gross-Neveu S-matrix for the scattering of two fundamental
particles (transforming as the $O(6)$ vector representation) writes as
\cite{ZZ4}%
\begin{equation}
S^{O(6)}(\theta)=b^{O(6)}(\theta)\mathbf{1}+c^{O(6)}(\theta)\mathbf{P}%
+d^{O(6)}(\theta)\mathbf{K\,,} \label{S}%
\end{equation}
or in terms of the components as%
\[
\left(  S^{O(6)}\right)  _{\alpha\beta}^{\delta\gamma}(\theta)=b^{O(6)}%
(\theta)\delta_{\alpha}^{\gamma}\delta_{\beta}^{\delta}+c^{O(6)}(\theta
)\delta_{\alpha}^{\delta}\delta_{\beta}^{\gamma}+d^{O(6)}(\theta
)\mathbf{C}^{\delta\gamma}\mathbf{C}_{\alpha\beta}=%
\begin{array}
[c]{c}%
\unitlength3mm\begin{picture}(6,6) \put(1,1){\line(1,1){4}} \put(5,1){\line(-1,1){4}} \put(.5,0){$\alpha$} \put(5,0){$\beta$} \put(5,5.4){$\gamma$} \put(.5,5.4){$\delta$} \put(1.8,.7){$\theta_1$} \put(3.5,.7){$\theta_2$} \end{picture}
\end{array}
\]
with the \textquotedblleft charge conjugation matrices\textquotedblright\ \
\begin{equation}
\mathbf{C}_{\alpha\beta}=\delta_{\alpha\bar{\beta}}~\text{and }\mathbf{C}%
^{\alpha\beta}=\delta^{\alpha\bar{\beta}}\, \label{C}%
\end{equation}
in the complex basis (see \cite{BFK8}). The three S-matrix eigenvalues are
$S_{\pm}^{O(6)}=b^{O(6)}\pm c^{O(6)}$ and $S_{0}^{O(6)}=b^{O(6)}%
+c^{O(6)}+6d^{O(6)}$ with%
\begin{equation}
\left(  S_{0}^{O(6)},S_{+}^{O(6)},S_{-}^{O(6)}\right)  =\left(  \frac
{\theta+i\pi}{\theta-i\pi},\frac{\theta-\frac{1}{2}i\pi}{\theta+\frac{1}%
{2}i\pi},1\right)  S_{-}^{O(6)}\,. \label{SO6}%
\end{equation}
Unitarity reads as%
\[
S_{0,+,-}^{O(6)}(-\theta)S_{0,+,-}^{O(6)}(\theta)=1\,.
\]
The highest weight amplitude is \cite{ZZ4}%
\begin{equation}
a^{O(6)}(\theta)=S_{+}^{O(6)}(\theta)=\frac{\Gamma\left(  1-\frac{1}{2\pi
i}\theta\right)  \Gamma\left(  \frac{1}{2}+\frac{1}{2\pi i}\theta\right)
}{\Gamma\left(  1+\frac{1}{2\pi i}\theta\right)  \Gamma\left(  \frac{1}%
{2}-\frac{1}{2\pi i}\theta\right)  }\frac{\Gamma\left(  \frac{3}{4}+\frac
{1}{2\pi i}\theta\right)  \Gamma\left(  \frac{1}{4}-\frac{1}{2\pi i}%
\theta\right)  }{\Gamma\left(  \frac{3}{4}-\frac{1}{2\pi i}\theta\right)
\Gamma\left(  \frac{1}{4}+\frac{1}{2\pi i}\theta\right)  }\,.\nonumber
\end{equation}
For later convenience we introduce again
\begin{equation}
\tilde{S}^{O(6)}(\theta)=S^{O(6)}(\theta)/a^{O(6)}(\theta)=\tilde{b}%
^{O(6)}(\theta)\mathbf{1}+\tilde{c}^{O(6)}(\theta)\mathbf{P}+\tilde{d}%
^{O(6)}(\theta)\mathbf{K} \label{StO6}%
\end{equation}
with%
\begin{equation}
\tilde{b}^{O(6)}(\theta)=\frac{\theta}{\theta-\frac{1}{2}i\pi},~\tilde
{c}^{O(6)}(\theta)=\frac{\mathbf{-}\frac{1}{2}i\pi}{\theta-i\pi},~\tilde
{d}^{O(6)}(\theta)=\frac{\theta}{\theta-\frac{1}{2}i\pi}\frac{\mathbf{-}%
\frac{1}{2}i\pi}{i\pi-\theta}\,.\nonumber
\end{equation}

\begin{remark}
Note, that the amplitudes $\tilde{b}$ and $\tilde{c}$ are the same for $SU(4)$
and $O(6)$.
\end{remark}

\subsection{Bound state S-matrix}

The S-matrix eigenvalue $S_{-}^{SU(4)}(\theta)$ of (\ref{Ev4}) has a pole at
$\theta=\frac{1}{2}i\pi$ wich means that two fundamental particles $A$ and $B$
form an anti-symmetric tensor bound state $\left(  AB\right)  $. The S-matrix
for the scattering of these bound states with fundamental particles is given
by \cite{K1,KT1}%
\begin{align}
S_{(RS)C}^{C^{\prime}(R^{\prime}S^{\prime})}(\theta_{(12)3})\Gamma
_{AB}^{(RS)}  &  =\Gamma_{A^{\prime}B^{\prime}}^{(R^{\prime}S^{\prime}%
)}S_{AC^{\prime\prime}}^{C^{\prime}A^{\prime}}(\theta_{13})S_{BC}%
^{C^{\prime\prime}B^{\prime}}(\theta_{23})\,\Big|_{\theta_{12}=\frac{1}{2}%
i\pi}\label{Sb}\\%
\begin{array}
[c]{c}%
\unitlength3.5mm\begin{picture}(6,5) \put(2,2){\line(1,1){2}} \put(4.5,1.5){\line(-1,1){3}} \put(2,2){\line(-4,-1){2}} \put(2,0){\line(0,1){2}} \put(.8,2.9){$_{(RS)}$} \put(4,3.5){$_{(R'S')}$} \put(0,.9){$_A$} \put(1,0){$_B$} \put(.6,4 ){$_{C'}$} \put(4,1){$_C$} \put(2,2){\makebox(0,0){$\bullet$}} \end{picture}
\end{array}
&  =~%
\begin{array}
[c]{c}%
\unitlength3.5mm\begin{picture}(6,5) \put(3,3){\line(1,1){1.2}} \put(4,0){\line(-1,1){4}} \put(0,2){\line(3,1){3}} \put(2,0){\line(1,3){1}} \put(4,3.5){$_{(R'S')}$} \put(1.8,3.2){$_{A'}$} \put(0,1.4){$_A$} \put(3,2){$_{B'}$} \put(1.1,0){$_B$} \put(.6,4 ){$_{C'}$} \put(1.,1.3 ){$_{C''}$} \put(4,.7){$_C$} \put(3,3){\makebox(0,0){$\bullet$}} \end{picture}
\end{array}
\nonumber
\end{align}
where $\theta_{(12)}=\frac{1}{2}\left(  \theta_{1}+\theta_{2}\right)  $ is the
bound state rapidity and $\theta_{12}/i=\pi/2$ the bound state fusion angle.
The bound state fusion intertwiner $\Gamma_{DE}^{(AB)}$ is defined by%
\begin{equation}
i\operatorname*{Res}_{\theta=\frac{1}{2}i\pi}S_{AB}^{B^{\prime}A^{\prime}%
}(\theta)=\sum_{R<S}\Gamma_{(RS)}^{B^{\prime}A^{\prime}}\Gamma_{AB}^{(RS)}=~~%
\begin{array}
[c]{c}%
\unitlength3.4mm\begin{picture}(3,6) \put(1,1){\oval(2,2)[t]} \put(1,5){\oval(2,2)[b]} \put(1,2){\makebox(0,0){$\bullet$}} \put(1,4){\makebox(0,0){$\bullet$}} \put(-.4,-.1){$A$} \put(1.5,-.1){$B$} \put(1.3,2.7){$(RS)$} \put(-.2,5.3){$B'$} \put(1.7,5.3){$A'$} \put(1,2){\line(0,1){2}} \end{picture}
\end{array}
~~. \label{inter}%
\end{equation}
With a convenient choice of an undetermined phase factor one has
\begin{equation}
\Gamma_{AB}^{(RS)}=i\Gamma(\tfrac{3}{4})\left(  2/\pi\right)  ^{1/4}\left(
\delta_{A}^{R}\delta_{B}^{S}-\delta_{A}^{S}\delta_{B}^{R}\right)  .
\label{inter1}%
\end{equation}
Applying formula (\ref{Sb}) twice we get the S-matrix for the scattering of
two bound states. For example we obtain%
\[
b^{SU(4)}b^{SU(4)}b^{SU(4)}b^{SU(4)}+b^{SU(4)}c^{SU(4)}b^{SU(4)}%
b^{SU(4)}-b^{SU(4)}c^{SU(4)}b^{SU(4)}c^{SU(4)}=-b^{O(6)}\left(  \theta\right)
\]
where the arguments on the left hand side are to be taken as $\theta+\frac
{1}{2}i\pi,\theta,\theta,\theta-\frac{1}{2}i\pi$. There are similar formulas
for the other amplitudes. The result is the S-matrix for the $O(6)$
Gross-Neveu model up to a sign\footnote{This is because the fundamental
Gross-Neveu particles are fermions (see \cite{KT1}).} (see \cite{KT1}).

We have the map $M_{(RS)}^{\alpha}$ from the anti-symmetric tensor $SU(4)$
bound states to the $O(6)$ vector states (in the complex basis) (see
\cite{KT1,BFK1,BFK8})%
\begin{equation}
\left.
\begin{array}
[c]{c}%
SU(4)~\text{bound states}\\

(RS)\in\{\left(  12\right)  ,\left(  13\right)  ,\left(  14\right)  ,\left(
23\right)  ,\left(  24\right)  ,\left(  34\right)  \}
\end{array}
\right\}  \leftrightarrow\left\{
\begin{array}
[c]{c}%
O(6)~\text{vector states}\\
\alpha\in\{1,2,3,\bar{3},\bar{2},\bar{1}\}
\end{array}
\right.  \,, \label{map}%
\end{equation}
which means that the no-zero matrix elements are%
\[
M_{(12)}^{1}=M_{(13)}^{2}=M_{(14)}^{3}=M_{(23)}^{\bar{3}}=M_{(24)}^{\bar{2}%
}=M_{(34)}^{\bar{1}}=1\,.
\]

\section{Form factors}

\label{s3}

The matrix element of a local operator $\mathcal{O}(x)$ for a state of $n$
particles of kind $\alpha_{i}$ with rapidities $\theta_{i}$
\begin{equation}
\langle\,0\,|\,\mathcal{O}(x)\,|\,\theta_{1},\dots,\theta_{n}\,\rangle
_{\underline{\alpha}}^{in}=e^{-ix(p_{1}+\cdots+p_{n})}F_{\underline{\alpha}%
}^{\mathcal{O}}(\underline{\theta})\,,~~\theta_{1}>\theta_{2}>\dots>\theta_{n}
\label{2.8}%
\end{equation}
defines the generalized form factor $F_{1\dots n}^{\mathcal{O}}({\underline
{\theta}})$, which is a co-vector valued function with components
$F_{\underline{\alpha}}^{\mathcal{O}}(\underline{\theta})$.

\paragraph{Form factor equations:}

The co-vector valued function $F_{1\dots n}^{\mathcal{O}}({\underline{\theta}%
})$ is meromorphic in all variables $\theta_{1},\dots,\theta_{n}$ and
satisfies the following relations \cite{KW,Sm}:

\begin{itemize}
\item[(i)] The Watson's equations describe the symmetry property under the
permutation of both, the variables $\theta_{i},\theta_{j}$ and the spaces
$i,j=i+1$ at the same time
\begin{equation}
F_{\dots ij\dots}^{\mathcal{O}}(\dots,\theta_{i},\theta_{j},\dots)=F_{\dots
ji\dots}^{\mathcal{O}}(\dots,\theta_{j},\theta_{i},\dots)\,S_{ij}(\theta_{ij})
\label{1.10}%
\end{equation}
for all possible arrangements of the $\theta$'s.

\item[(ii)] The crossing relation implies a periodicity property under the
cyclic permutation of the rapidity variables and spaces
\begin{multline}
^{~\text{out,}\bar{1}}\langle\,p_{1}\,|\,\mathcal{O}(0)\,|\,p_{2},\dots
,p_{n}\,\rangle_{2\dots n}^{\text{in,conn.}}\\
=F_{1\ldots n}^{\mathcal{O}}(\theta_{1}+i\pi,\theta_{2},\dots,\theta
_{n})\sigma_{1}^{\mathcal{O}}\mathbf{C}^{\bar{1}1}=F_{2\ldots n1}%
^{\mathcal{O}}(\theta_{2},\dots,\theta_{n},\theta_{1}-i\pi)\mathbf{C}%
^{1\bar{1}} \label{1.12}%
\end{multline}
where $\sigma_{\alpha}^{\mathcal{O}}$ takes into account the statistics of the
particle $\alpha$ with respect to $\mathcal{O}$. The charge conjugation matrix
$\mathbf{C}^{\bar{1}1}$ will be discussed below.

\item[(iii)] There are poles determined by one-particle states in each
sub-channel given by a subset of particles of the state in (\ref{2.8}). In
particular the function $F_{\underline{\alpha}}^{\mathcal{O}}({\underline
{\theta}})$ has a pole at $\theta_{12}=i\pi$ such that
\begin{equation}
\operatorname*{Res}_{\theta_{12}=i\pi}F_{1\dots n}^{\mathcal{O}}(\theta
_{1},\dots,\theta_{n})=2i\,\mathbf{C}_{12}\,F_{3\dots n}^{\mathcal{O}}%
(\theta_{3},\dots,\theta_{n})\left(  \mathbf{1}-\sigma_{2}^{\mathcal{O}}%
S_{2n}\dots S_{23}\right)  \,. \label{1.14}%
\end{equation}

\item[(iv)] If there are also bound states in the model the function
$F_{\underline{\alpha}}^{\mathcal{O}}({\underline{\theta}})$ has additional
poles. If for instance the particles 1 and 2 form a bound state (12), there is
a pole at $\theta_{12}=i\eta$ such that
\begin{equation}
\operatorname*{Res}_{\theta_{12}=i\eta}F_{12\dots n}^{\mathcal{O}}(\theta
_{1},\theta_{2},\dots,\theta_{n})\,=F_{(12)\dots n}^{\mathcal{O}}%
(\theta_{(12)},\dots,\theta_{n})\,\sqrt{2}\Gamma_{12}^{(12)} \label{1.16}%
\end{equation}
where the bound state intertwiner $\Gamma_{12}^{(12)}$ is here given by
(\ref{inter1}) and the values of $\theta_{1},\,\theta_{2},\,\theta_{(12)}$ are
given in general in \cite{K1,KT1,BK}.

\item[(v)] Naturally, since we are dealing with relativistic quantum field
theories we finally have
\begin{equation}
F_{1\dots n}^{\mathcal{O}}(\theta_{1}+\mu,\dots,\theta_{n}+\mu)=e^{s\mu
}\,F_{1\dots n}^{\mathcal{O}}(\theta_{1},\dots,\theta_{n}) \label{1.18}%
\end{equation}
if the local operator transforms under Lorentz transformations as
$\mathcal{O}\rightarrow e^{s\mu}\mathcal{O}$ where $s$ is the
\textquotedblleft spin\textquotedblright\ of $\mathcal{O}$.
\end{itemize}

\noindent For the $SU(4)$ S-matrix (\ref{S4}) the bound state pole is at
$\theta=\frac{1}{2}i\pi$, i.e. $\eta=\frac{1}{2}$.

\paragraph{The general form factor formula:}

We write the general form factor $F_{1\dots n}^{\mathcal{O}}(\underline
{\theta})$ for $n$ fundamental particles following \cite{KW} as
\begin{equation}
F_{\underline{\alpha}}^{\mathcal{O}}(\underline{\theta})=K_{\underline{\alpha
}}^{\mathcal{O}}(\underline{\theta})\prod_{1\leq i<j\leq n}F(\theta_{ij})
\label{FK}%
\end{equation}
where $F(\theta)$ is the minimal form factor function (see below). The
K-function $K_{1\dots n}^{\mathcal{O}}(\underline{\theta})$ is given in terms
of a nested `off-shell' Bethe ansatz (see e.g. \cite{BFK7,BFK1})
\begin{equation}
\fbox{$\rule[-0.2in]{0in}{0.5in}\displaystyle~K_{\underline{\alpha}%
}^{\mathcal{O}}(\underline{\theta})=\int_{\mathcal{C}_{\underline{\theta}%
}^{(1)}}dz_{1}\cdots\int_{\mathcal{C}_{\underline{\theta}}^{(m)}}%
dz_{m}\,h(\underline{\theta},\underline{z})p^{\mathcal{O}}(\underline{\theta
},\underline{z})\,\tilde{\Psi}_{\underline{\alpha}}(\underline{\theta
},\underline{z})$~} \label{K}%
\end{equation}
written as a multiple contour integral. The scalar function $h(\underline
{\theta},\underline{z})$ depends only on the S-matrix and not on the specific
operator $\mathcal{O}(x)$%
\begin{align}
h(\underline{\theta},\underline{z})  &  =\prod_{i=1}^{n}\prod_{j=1}^{m}%
\tilde{\phi}(\theta_{i}-z_{j})\prod_{1\leq i<j\leq m}\tau(z_{i}-z_{j}%
)\label{h}\\
\tau(z)  &  =\frac{1}{\tilde{\phi}(-z)\tilde{\phi}(z)}\,. \label{tau}%
\end{align}
The dependence on the specific operator $\mathcal{O}(x)$ is encoded in the
scalar p-function $p^{\mathcal{O}}(\underline{\theta},\underline{z})$ which is
in general a simple function of $e^{\theta_{i}}$ and $e^{z_{j}}$. The function
$\tilde{\phi}(\theta)$ and the integration contours $\mathcal{C}%
_{\underline{\theta}}$ depend on the model and are given below.

\subsection{$SU(4)$ form factors}

\paragraph{Minimal form factor:}

Let $S\left(  \theta\right)  $ be an S-matrix eigenvalue. The solution of
Watson's and the crossing equations (i) and (ii) for two particles%
\begin{equation}%
\begin{array}
[c]{c}%
F\left(  \theta\right)  =S\left(  \theta\right)  F\left(  -\theta\right) \\
F\left(  i\pi+\theta\right)  =F\left(  i\pi-\theta\right)
\end{array}
\label{wat}%
\end{equation}
with no poles in the physical strip $0\leq\operatorname{Im}\theta\leq\pi$ and
at most a simple zero at $\theta=0$ is the minimal form factor \cite{KW}. For
the construction of the `off-shell' Bethe ansatz the minimal form factor for
the highest weight eigenvalue of the $SU(4)$ S-matrix $a^{SU(4)}(\theta)$ of
(\ref{a4}) is essential. The unique solution (up to a constant factor) is%
\begin{equation}
F^{SU(4)}\left(  \theta\right)  =\frac{G\left(  \frac{1}{2}\frac{\theta}{i\pi
}\right)  G\left(  1-\frac{1}{2}\frac{\theta}{i\pi}\right)  }{G\left(
\frac{3}{4}+\frac{1}{2}\frac{\theta}{i\pi}\right)  G\left(  \frac{7}{4}%
-\frac{1}{2}\frac{\theta}{i\pi}\right)  } \label{F4}%
\end{equation}
where $G\left(  z\right)  $ is Barnes G-function, which satisfies (see e.g.
\cite{Wo})%
\[
G\left(  1+z\right)  =\Gamma\left(  z\right)  G\left(  z\right)  .
\]

\paragraph{The $n$-particle form factor for $SU(4)$}

is given by (\ref{K}) and the function $\tilde{\phi}(\theta)$ in (\ref{h}) and
(\ref{tau}) is (see \cite{BFK1})%
\begin{equation}
\tilde{\phi}(\theta)=\Gamma\left(  \tfrac{3}{4}+\tfrac{1}{2\pi i}%
\theta\right)  \Gamma\left(  -\tfrac{1}{2\pi i}\theta\right)  . \label{phi4}%
\end{equation}
The integration contour in (\ref{K}) for $SU(4)$ is depicted in Fig.
\ref{f2}\begin{figure}[th]%
\[
\unitlength3.5mm\begin{picture}(25,13)
\thicklines
\put(1,0){
\put(0,0){$\bullet~\theta_n- 2i\pi$}
\put(.19,4.5){\circle{.3}~$\theta_n-\frac12 i\pi$}
\put(0,6){$\bullet~~\theta_n$}
\put(.3,6.3){\oval(1,1)}
\put(.68,6.78){\vector(1,0){0}}
\put(0,10.5){$\bullet~\theta_n+\frac32 i\pi$}
}
\put(8,6){\dots}
\put(12,0){
\put(0,0){$\bullet~\theta_2- 2i\pi$}
\put(.19,4.5){\circle{.3}~$\theta_n-\frac12 i\pi$}
\put(0,6){$\bullet~~\theta_2$}
\put(.3,6.3){\oval(1,1)}
\put(.68,6.78){\vector(1,0){0}}
\put(0,10.5){$\bullet~\theta_2+\frac32 i\pi$}
}
\put(20,1){
\put(0,0){$\bullet~\theta_1- 2i\pi$}
\put(.19,4.5){\circle{.3}~$\theta_1-\frac12 i\pi$}
\put(0,6){$\bullet~~\theta_1$}
\put(.3,6.3){\oval(1,1)}
\put(.68,6.78){\vector(1,0){0}}
\put(0,10.5){$\bullet~\theta_1+\frac32 i\pi$}
}
\put(9,2){\vector(1,0){0}}
\put(0,2.5){\oval(34,1)[br]}
\put(25,2.5){\oval(16,1)[tl]}
\end{picture}
\]
\caption{The integration contour for $SU(4)$ }%
\label{f2}%
\end{figure}

\paragraph{Nesting:}

The Bethe state in (\ref{K}) for $SU(4)$ is written as%
\begin{equation}
\tilde{\Psi}_{\underline{A}}(\underline{\theta},\underline{z})=K_{\underline
{B}}^{(1)}(\underline{z})\,\tilde{\Phi}_{\underline{A}}^{\underline{B}%
}(\underline{\theta},\underline{z}) \label{1.37}%
\end{equation}
where $\underline{A}=\left(  A_{1},\dots,A_{n}\right)  $ with $1\leq A_{i}%
\leq4$ and summation over all $\underline{B}=(B_{1},\dots,B_{m})$ with $2\leq
B_{i}\leq4$ is assumed. The basic Bethe ansatz co-vectors (in the algebraic
formulation) $\tilde{\Phi}_{1\dots n}^{\underline{B}}\in\left(  V^{1\dots
n}\right)  ^{\dag}$ are defined as \cite{BFK1}%
\begin{equation}%
\begin{array}
[c]{rcl}%
\tilde{\Phi}_{1\dots n}^{\underline{B}}(\underline{\theta},\underline{z}) &
= & \Omega_{1\dots n}\tilde{C}_{1\dots n}^{B_{m}}(\underline{\theta}%
,z_{m})\cdots\tilde{C}_{1\dots n}^{B_{1}}(\underline{\theta},z_{1})\\
\tilde{\Phi}_{\underline{A}}^{\underline{B}}(\underline{\theta},\underline
{z}) & = &
\begin{array}
[c]{c}%
\unitlength4mm\begin{picture}(9,7) \put(9,5){\oval(14,2)[lb]} \put(9,5){\oval(18,6)[lb]} \put(4,1){\line(0,1){4}} \put(8,1){\line(0,1){4}} \put(-.2,5.4){$B_1$} \put(1.8,5.4){$B_m$} \put(3.8,.1){$A_1$} \put(7.8,.1){$A_n$} \put(3.8,5.4){$1$} \put(7.8,5.4){$1$} \put(9.2,1.8){$1$} \put(9.2,3.8){$1$} \put(3,2.5){$\theta_1$} \put(6.8,2.5){$\theta_{n}$} \put(.8,2.5){$z_1$} \put(1.7,3.6){$z_m$} \put(5.4,4.5){$\dots$} \put(8.5,2.6){$\vdots$} \end{picture}
\end{array}
~~~,~~~%
\begin{array}
[c]{l}%
2\leq B_{i}\leq4\\
1\leq A_{i}\leq4~.
\end{array}
\end{array}
\label{1.38}%
\end{equation}
The nested Bethe ansatz is obtained by writing for $K_{\underline{B}}%
^{(1)}(\underline{z})$ of (\ref{1.37}) an ansatz as (\ref{K}) and so on: for
$K_{\underline{B}}^{(1)}(\underline{z}^{(1)})$ we have an $SU(3)$ and for
$K_{\underline{C}}^{(2)}(\underline{z}^{(2)})$ an $SU(2)$ Bethe ansatz, which
is well known. The number $m=n_{1}$ in (\ref{1.38}) is the number of
\textquotedblleft weight flip\textquotedblright operators. These numbers for
the various levels of the nested Bethe ansatz satisfy \cite{BFK1}%
\begin{equation}
(n-n_{1},n_{1}-n_{2},n_{2}-n_{3},n_{3})=w^{\emph{O}}+L(1,1,1,1) \label{w}%
\end{equation}
where $w^{\emph{O}}$ is the weight vector of the operator $\mathcal{O}$ and
$L=0,1,2,\dots$.

\subsection{$O(6)$ form factors}

\subparagraph{Minimal form factors:}

The solutions of Watson's and the crossing equations (i) and (ii) for two
particles (\ref{wat}) with no poles in the physical strip $0\leq
\operatorname{Im}\theta\leq\pi$ and at most a simple zero at $\theta=0$ are
the minimal form factors \cite{KW}%
\begin{equation}
\left(  F_{0}^{O(6)},F_{+}^{O(6)},F_{-}^{O(6)}\right)  ^{\min}=\left(
\frac{2\tanh\frac{1}{2}\left(  i\pi-\theta\right)  }{i\pi-\theta},\frac
{\Gamma\left(  \frac{5}{4}-\frac{1}{2\pi i}\theta\right)  \Gamma\left(
\frac{1}{4}+\frac{1}{2\pi i}\theta\right)  }{\Gamma^{2}\left(  \frac{3}%
{4}\right)  \cosh\frac{1}{2}\left(  i\pi-\theta\right)  },1\right)
F_{-}^{O(6)\min}\,. \label{Fmin6}%
\end{equation}
They belong to the S-matrix eigenvalues $S_{0}^{O(6)}$ and $S_{\pm}^{O(6)}$ of
(\ref{SO6}). The full 2-particle form factors are%
\begin{equation}
F_{+,-,0}^{O(6)}\left(  \theta\right)  =\frac{1}{\sinh\frac{1}{2}(\theta
-\frac{1}{2}i\pi)\sinh\frac{1}{2}(\theta+\frac{1}{2}i\pi)}F_{+,-,0}^{O(6)\min
}\left(  \theta\right)  . \label{Fpm06}%
\end{equation}
They are non-minimal solutions of (\ref{wat}) having a pole at $\theta
=\frac{1}{2}i\pi$ (see (5.10) and (2.16) of \cite{KW}). For the construction
of the `off-shell' Bethe ansatz the minimal solution of the form factor
equation (\ref{wat}) for the highest weight eigenvalue of the $O(N)$
S-matrix\footnote{The minus sign in (\ref{Fmin0}) is due to fermionic
statistics of the fundamental particles (see also eq. 4.12 of \cite{BFKZ}).}
\begin{equation}
F^{O(6)}\left(  \theta\right)  =-a^{O(6)}(\theta)F^{O(6)}\left(
-\theta\right)  \label{Fmin0}%
\end{equation}
is essential. The unique solution (up to a constant factor) is%
\begin{align}
F^{O(6)}\left(  \theta\right)   &  =c\cosh\tfrac{1}{2}\left(  i\pi
-\theta\right)  \,F_{+}^{O(6)\min}\left(  \theta\right) \label{F6}\\
&  =\frac{G\left(  \frac{1}{2}\frac{\theta}{i\pi}\right)  G\left(  1-\frac
{1}{2}\frac{\theta}{i\pi}\right)  }{G\left(  \frac{1}{2}+\frac{1}{2}%
\frac{\theta}{i\pi}\right)  G\left(  \frac{3}{2}-\frac{1}{2}\frac{\theta}%
{i\pi}\right)  }\frac{G\left(  \frac{1}{4}+\frac{1}{2}\frac{\theta}{i\pi
}\right)  G\left(  \frac{5}{4}-\frac{1}{2}\frac{\theta}{i\pi}\right)
}{G\left(  \frac{3}{4}+\frac{1}{2}\frac{\theta}{i\pi}\right)  G\left(
\frac{7}{4}-\frac{1}{2}\frac{\theta}{i\pi}\right)  }.\nonumber
\end{align}
The function $\,\tilde{\phi}(\theta)$ in (\ref{h}) is the same as (\ref{phi4})
for $SU(4)$ and the integration contours in (\ref{K}) can be found in
\cite{BFK8}.

\subsection{Bound state form factors}

The statistics factor of two fundamental particles in the chiral $SU(N)$
Gross-Neveu model \cite{KuS,KKS,BFK1} is $\sigma=\exp\left(  2\pi is\right)
$, where $s=\frac{1}{2}\left(  1-\frac{1}{N}\right)  $ is the spin. For
$SU(4)$ this means that the spin is $s=\frac{3}{8},$ and the statistics factor
is$~\sigma=\exp\left(  \frac{3}{4}\pi i\right)  $. In particular, the bound
states of two fundamental $SU(4)$ particles are fermions because $\sigma
^{4}=-1$.

An $n^{\prime}=n/2$-particle form factor for $O(6)$ is calculated from an
$n$-particle one of $SU(4)$ using the bound state formula (iv) of (\ref{1.16})%
\begin{equation}
F_{\underline{\alpha}}^{O(6)}(\underline{\omega})\Gamma_{\underline{A}%
}^{\underline{\alpha}}=2^{-n/4}\operatorname*{Res}_{\theta_{12}=\frac{1}%
{2}i\pi}\ldots\operatorname*{Res}_{\theta_{n-1n}=\frac{1}{2}i\pi}%
F_{\underline{A}}^{SU(4)}(\underline{\theta}) \label{Fn}%
\end{equation}
where $\Gamma_{\underline{A}}^{\underline{\alpha}}=\Gamma_{A_{1}A_{2}}%
^{\alpha_{1}}\ldots\Gamma_{A_{n-1}A_{n}}^{\alpha_{n^{\prime}}}$ is the total
intertwiner and $\omega_{i}=\frac{1}{2}\left(  \theta_{2i-1}+\theta
_{2i}\right)  $ are the bound state rapidities.

\begin{lemma}
\label{l1}The bound state form factors defined by (\ref{Fn}) satisfy the form
factor equations (i) - (v) of (\ref{1.10}) - (\ref{1.16}). The K-functions
defined by (\ref{FK}) and (\ref{Fn}) satisfy, in particular

\begin{enumerate}
\item
\begin{equation}
K_{\underline{\alpha}}^{O(6)}(\underline{\omega})\Gamma_{\underline{A}%
}^{\underline{\alpha}}=2^{-n/4}\prod\limits_{1\leq i<j\leq n^{\prime}}\frac
{1}{\tilde{\phi}(-\omega_{ij})\tilde{\phi}(-\omega_{ij}+\frac{1}{2}i\pi
)}\operatorname*{Res}_{\theta_{12}=\frac{1}{2}i\pi}\ldots\operatorname*{Res}%
_{\theta_{n-1n}=\frac{1}{2}i\pi}K_{\underline{A}}^{SU(4)}(\underline{\theta})
\label{Kn}%
\end{equation}

\item the form factor equation (iii) in the form (see \cite{BFK8})%
\begin{equation}
\operatorname*{Res}_{\omega_{12}=i\pi}K_{1\dots n^{\prime}}^{O(6)}%
(\underline{\omega})=\frac{2i}{F^{O(6)}(i\pi)}\mathbf{C}_{12}\prod
_{i=3}^{n^{\prime}}\tilde{\phi}(\omega_{i1}+\tfrac{1}{2}i\pi)\tilde{\phi
}(\omega_{i2})K_{3\dots n^{\prime}}^{O(6)}(\underline{\check{\omega}})\left(
\mathbf{1}-S_{2n^{\prime}}\dots S_{23}\right)  \label{iii6}%
\end{equation}
with $\underline{\check{\omega}}=\omega_{3},\ldots\omega_{n^{\prime}}$.
\end{enumerate}
\end{lemma}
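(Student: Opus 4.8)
The plan is to read the multiple residue in (\ref{Fn}) as an iteration of the $SU(4)$ bound-state equation (iv) of (\ref{1.16}): each operation $\operatorname{Res}_{\theta_{2i-1,2i}=\frac12 i\pi}$ fuses the pair of fundamental $SU(4)$ particles $(2i-1,2i)$ into the antisymmetric-tensor bound state, which by the map (\ref{map}) is the $O(6)$ vector $\alpha_i$, the contraction with $\Gamma^{\underline\alpha}_{\underline A}$ projecting onto it. To establish (i)--(v) for $F^{O(6)}$ I would transport each $SU(4)$ equation through the residues, the central algebraic input being the bound-state S-matrix relation (\ref{Sb})--(\ref{inter}), which expresses the $O(6)$ S-matrix (\ref{S}) as the fusion of four $SU(4)$ S-matrices up to the fermionic sign noted after (\ref{inter1}). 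Three facts carry the argument: residues commute with the symmetry operations in (i), (ii) and (v); the fusion relation (\ref{Sb}) turns products of constituent $SU(4)$ S-matrices into $O(6)$ ones; and a short list of Barnes $G$-function identities relates the minimal form factors (\ref{F4}) and (\ref{F6}).

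I would dispose of (v), (i), (ii) and (iv) first. For (v) the bound-state rapidities $\omega_i=\frac12(\theta_{2i-1}+\theta_{2i})$ shift uniformly with the $\theta$'s, so the residue inherits the Lorentz weight $e^{s\mu}$ of $F^{SU(4)}$ unchanged. For (i), exchanging two $O(6)$ particles is the exchange of two adjacent $SU(4)$ pairs, i.e. four elementary transpositions; applying (\ref{1.10}) four times produces the product of four constituent S-matrices sandwiched between the intertwiners, which collapses by (\ref{Sb}) used twice to a single $S^{O(6)}_{ij}$, with the constituent fermionic signs reproducing the sign between the fused $SU(4)$ and the $O(6)$ S-matrix. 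For (ii), crossing $\omega_1\to\omega_1+i\pi$ is implemented by shifting both constituents $\theta_1,\theta_2$ by $i\pi$; applying (\ref{1.12}) to each and contracting with $\epsilon_{ABCD}$ of (\ref{C4}), which sends the antisymmetric tensor to its complement, yields the $O(6)$ charge-conjugation matrix (\ref{C}), while the product of the two constituent statistics factors furnishes $\sigma^{O(6)}$. The remaining bound-state poles (iv) are obtained analogously, by transporting the corresponding $SU(4)$ poles through the residue, and are less delicate than the annihilation pole below.

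For the first K-function identity (\ref{Kn}) I would insert the factorization (\ref{FK}) on both sides of (\ref{Fn}). The pole driving each residue sits entirely in $K^{SU(4)}_{\underline A}$ through its Bethe-ansatz integrals, while $\prod_{a<b}F^{SU(4)}(\theta_{ab})$ is regular at the fusion points and merely factorizes: at $\theta_{2i-1}=\omega_i+\frac14 i\pi$, $\theta_{2i}=\omega_i-\frac14 i\pi$ the intra-pair factors give the constant $F^{SU(4)}(\frac12 i\pi)$, and the inter-pair factors for $i<j$ combine to $\big(F^{SU(4)}(\omega_{ij})\big)^2F^{SU(4)}(\omega_{ij}+\frac12 i\pi)F^{SU(4)}(\omega_{ij}-\frac12 i\pi)$. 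The remaining content is the per-pair Barnes $G$ identity
\[
\frac{\big(F^{SU(4)}(\omega)\big)^2F^{SU(4)}(\omega+\tfrac12 i\pi)F^{SU(4)}(\omega-\tfrac12 i\pi)}{F^{O(6)}(\omega)}=\frac{\mathrm{const}}{\tilde\phi(-\omega)\tilde\phi(-\omega+\frac12 i\pi)},
\]
verified directly from (\ref{F4}), (\ref{F6}) and (\ref{phi4}); dividing by $\prod_{i<j}F^{O(6)}(\omega_{ij})$ and letting the constants together with $(F^{SU(4)}(\frac12 i\pi))^{n/2}$ be absorbed into the normalization reproduces exactly the prefactor of (\ref{Kn}).

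Finally, the second K-function identity (\ref{iii6}) is the annihilation pole (iii) read off at the K-function level, and I expect this to be the main obstacle. The difficulty is that the $O(6)$ pole at $\omega_{12}=i\pi$ has no counterpart in a single $SU(4)$ annihilation, since two fundamental $SU(4)$ particles are never mutually conjugate; it appears only once the two $O(6)$ particles are taken conjugate, so that their four constituent indices saturate $\epsilon_{ABCD}$ into an $SU(4)$ singlet. I would therefore argue at the level of the nested K-functions: starting from (\ref{Kn}), combine the known $SU(4)$ recursion — the K-function form of (iii) established in \cite{BFK1} — with contour-pinching in the integrals of (\ref{K}), as in \cite{BFK8}, to show that the iterated $SU(4)$ residue develops precisely a simple pole at $\omega_{12}=i\pi$ whose residue, after the fusion intertwiners act, produces $\mathbf{C}_{12}$, the scalar $2i/F^{O(6)}(i\pi)$, the products $\prod_{i\ge3}\tilde\phi(\omega_{i1}+\frac12 i\pi)\tilde\phi(\omega_{i2})$, and the monodromy $\mathbf{1}-S_{2n'}\cdots S_{23}$. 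The delicate points are the consistent tracking of the fermionic signs and of the minimal-form-factor and $\tilde\phi$ normalizations through this pinched configuration, which is exactly why phrasing the result for the K-functions, where these scalar factors are separated out, keeps the bookkeeping manageable.
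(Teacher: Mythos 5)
Your proposal is correct and follows essentially the paper's own route: your part 1 rests on exactly the identity (\ref{F46}), i.e. that $F^{SU(4)}(\omega+\tfrac12 i\pi)\left(F^{SU(4)}(\omega)\right)^2F^{SU(4)}(\omega-\tfrac12 i\pi)\big/F^{O(6)}(\omega)=1\big/\bigl(\tilde\phi(-\omega)\tilde\phi(-\omega+\tfrac12 i\pi)\bigr)$, which is the paper's central computation, and your transport of (i), (ii), (iv), (v) through the residues via the fusion relation (\ref{Sb}) is precisely the general argument the paper imports wholesale by citing Appendix E of \cite{BK}. For part 2, your plan --- the $O(6)$ annihilation pole arising only from the four-constituent $SU(4)$ singlet saturating $\epsilon_{ABCD}$ (which becomes $\mathbf{C}_{12}$), handled at K-function level via the $SU(4)$ recursion and bookkeeping of the $\tilde\phi$ factors --- coincides with the paper's stated alternative direct proof, whose scalar content is the identity $\prod_{i=5}^{n}\prod_{j=2}^{4}\tilde\phi(\theta_{ij})\big/\prod_{i<j\le n'}^{(1,2)}\tilde\phi(-\omega_{ij})\tilde\phi(-\omega_{ij}+\tfrac12 i\pi)=\prod_{i=3}^{n'}\tilde\phi(\omega_{i1}+\tfrac12 i\pi)\tilde\phi(\omega_{i2})$ at the fusion points.
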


\begin{proof}
In Appendix E of \cite{BK} was proved that in general bound state form factors
satisfy the form factor equations. We use the variables $u,o$ with
$\theta=\tfrac{1}{2}i\pi u,~\omega=\tfrac{1}{2}i\pi o$.

\begin{enumerate}
\item Equation (\ref{Fn}) implies for the K-functions (\ref{Kn}) because from
(\ref{F4}) and (\ref{F6}) we derive%
\begin{equation}
\frac{F^{SU(4)}\left(  \omega+\frac{1}{2}i\pi\right)  \left(  F^{SU(4)}\left(
\omega\right)  \right)  ^{2}F^{SU(4)}\left(  \omega-\frac{1}{2}i\pi\right)
}{F^{O(6)}\left(  \omega\right)  }=\frac{1}{\tilde{\phi}(-\omega)\tilde{\phi
}(-\omega+\frac{1}{2}i\pi)}. \label{F46}%
\end{equation}

\item This follows from the general proof of (iii) in \cite{BK} and
(\ref{Kn}). One can also prove it directly from (iii) for $F_{\underline{A}%
}^{SU(4)}(\underline{\theta})$, eq. (\ref{Kn}) and (up to a const.)%
\[
\frac{\prod\limits_{3\leq i<j\leq n^{\prime}}\tilde{\phi}(-\omega_{ij}%
)\tilde{\phi}(-\omega_{ij}+\frac{1}{2}i\pi)}{\prod\limits_{1\leq i<j\leq
n^{\prime}}\tilde{\phi}(-\omega_{ij})\tilde{\phi}(-\omega_{ij}+\frac{1}{2}%
i\pi)}\prod_{i=5}^{n}\prod_{j=2}^{4}\tilde{\phi}(\theta_{ij})=\prod
_{i=3}^{n^{\prime}}\tilde{\phi}(\omega_{i1}+\tfrac{1}{2}i\pi)\tilde{\phi
}(\omega_{i2})
\]
for $\theta_{2i-1}=\omega_{i}+\frac{1}{4}i\pi,~\theta_{2i}=\omega_{i}-\frac
{1}{4}i\pi$ and $\omega_{12}=i\pi$.
\end{enumerate}
\end{proof}

This lemma implies the following

\begin{corollary}
In \cite{KT1} we demonstrated that the isomorphism $O(6)\simeq SU(4)$, leads
to an equivalence between the $O(6)$ Gross-Neveu model and the $SU(4)$ chiral
Gross-Neveu model for the S-matrices. The results of this section show, that
this is also true for the form factors.
\end{corollary}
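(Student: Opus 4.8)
The plan is to obtain the Corollary as a direct consequence of Lemma \ref{l1} together with the S-matrix equivalence already established in \cite{KT1}. First I would pin down what ``equivalence for the form factors'' should mean: to each local operator of the $O(6)$ model there should correspond a local operator of the $SU(4)$ model such that the entire tower of $O(6)$ form factors $F_{\underline{\alpha}}^{O(6)}(\underline{\omega})$ is reproduced, through the state identification (\ref{map}) and the total intertwiner $\Gamma_{\underline{A}}^{\underline{\alpha}}$, by the bound-state form factors (\ref{Fn}) built from the $SU(4)$ form factors $F_{\underline{A}}^{SU(4)}(\underline{\theta})$. The S-matrix half of the statement is quoted from \cite{KT1}; my task is to upgrade it to the full form-factor data.

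The decisive input is Lemma \ref{l1}: it guarantees that the objects defined by (\ref{Fn}) are genuine $O(6)$ form factors, i.e. they solve the complete set of equations (i)--(v) of (\ref{1.10})--(\ref{1.16}) with the $O(6)$ S-matrix. Thus the operation ``take residues at the bound-state poles $\theta_{2i-1,2i}=\tfrac{1}{2}i\pi$ and contract with $\Gamma$'' maps the $SU(4)$ form-factor problem into the $O(6)$ one. I would package this as follows: (a) the state correspondence (\ref{map}) is an isomorphism of the relevant representation spaces under which the $SU(4)$ bound-state S-matrix goes to $S^{O(6)}$, up to the fermionic sign noted in \cite{KT1}; (b) by Lemma \ref{l1} the residue and analytic structure is preserved, the minimal form factors matching through the identity (\ref{F46}), which converts the product of four $SU(4)$ minimal factors into the $O(6)$ one and supplies exactly the scalar factor appearing in (\ref{Kn}); (c) hence solving the $O(6)$ equations is, on the subspace singled out by the fusion, the same as solving the $SU(4)$ ones.

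The remaining point, and the one I expect to be the main obstacle, is completeness of the correspondence rather than mere consistency. Lemma \ref{l1} shows that fused $SU(4)$ form factors \emph{are} $O(6)$ form factors; to assert an equivalence I must also argue that every $O(6)$ form factor arises this way and that the operator contents match. For the latter I would track the p-function $p^{\mathcal{O}}(\underline{\theta},\underline{z})$ and the weight data through the nested ansatz (\ref{K}), using the level counting (\ref{w}) and the Lorentz-spin relation (\ref{1.18}) to see which $SU(4)$ operator a given $O(6)$ operator corresponds to after fusion. Because both models are built from the same off-shell Bethe ansatz with the same $\tilde{\phi}$ and, by the Remark, the same amplitudes $\tilde{b},\tilde{c}$, the Bethe states and hence the K-functions on the two sides are matched by the fusion, and uniqueness of the minimal form factor then fixes the full form factors. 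I would close by invoking the general bound-state result of Appendix E of \cite{BK} to ensure no spurious poles are introduced, so that the map on solutions is a bijection and the equivalence indeed extends from the S-matrices to the form factors, as claimed.
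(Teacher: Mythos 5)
Your core argument is exactly the paper's: the corollary is presented there as an immediate consequence of Lemma \ref{l1}, i.e.\ the fused objects (\ref{Fn}) satisfy the full set of $O(6)$ form factor equations, which together with the S-matrix identification of \cite{KT1} is all that the corollary asserts. The one divergence is your third paragraph: you treat surjectivity of the fusion map (``every $O(6)$ form factor arises this way'') as a necessary part of the statement and try to argue it via weight counting, the shared $\tilde{b},\tilde{c},\tilde{\phi}$, and uniqueness of the minimal form factor. The paper makes no such claim and gives no such argument; the operator-by-operator matching is only exhibited for concrete examples in Section \ref{s4} (the current, the iso-scalar $\bar{\psi}\psi$, the field $\psi^{\alpha}$), where agreement with \cite{BFK8} is checked directly. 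Your sketch of that completeness step would not stand as a proof as written (uniqueness of the minimal form factor fixes two-particle solutions, not the whole tower of K-functions, and the bijection claim is unsupported), but since the corollary as stated does not require it, this is an overshoot rather than a gap in what you actually need.
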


\section{Examples of operators}

\label{s4}

We use the results of \cite{BFK1} and \cite{BFK8}.

\subsection{The current $J^{\mu}(x)$}

\paragraph{The $SU(4)$ form factor:}

The $SU(4)$ Noether current $J_{A\bar{B}}^{\mu}(x)$ transforms as the adjoint
representation with weight vector $w^{J}=(2,1,1,0)$. Because the Bethe ansatz
yields highest weight states we consider the highest weight component%
\[
J_{1\bar{4}}^{\mu}(x)=\epsilon^{\mu\nu}\partial_{\nu}J(x)\,,
\]
where the anti-particle $\bar{4}$ is defined by (\ref{C4}) and $J(x)$ is the
pseudo potential with the p-function in (\ref{K}) (see subsection 4.3 of
\cite{BFK1})
\begin{equation}
p^{J}(\underline{\theta},\underline{\underline{z}})=e^{\frac{1}{2}\left(
\sum\theta_{i}-\sum z_{i}^{(1)}-\sum z_{i}^{(3)}\right)  }/\sum e^{\theta_{i}%
}.\label{pJ}%
\end{equation}
The $n$-particle current form factor for $SU(4)$ is given by (\ref{FK}) and
the nested `off-shell' Bethe ansatz (\ref{K}) with the p-function (\ref{pJ}).
The numbers of \textquotedblleft weight flip\textquotedblright\ operators in
the various levels of the nested Bethe ansatz are given by (\ref{w}) as
$n=4+4L,\,n_{1}=2+3L,\,n_{2}=1+2L,\,n_{3}=L$. In particular we consider $L=0$,
i.e. $n=4,\,n_{1}=2,\,n_{2}=1$ and $n_{3}=0$. The Bethe state in (\ref{K}) is
then%
\begin{align}
\Psi_{\underline{A}}(\underline{\theta},\underline{z}) &  =K_{\underline{B}%
}^{(1)}(\underline{z})\Phi_{\underline{A}}^{\underline{B}}(\underline{\theta
},\underline{z})\nonumber\\
K_{\underline{B}}^{(1)}(\underline{z}) &  =\int dy\prod_{j=1}^{2}\tilde{\phi
}(z_{j}-y)\Psi_{\underline{B}}^{(1)}(\underline{z},y)\label{K1}\\
\Psi_{\underline{B}}^{(1)}(\underline{z},y) &  =\delta_{B_{1}}^{2}%
\delta_{B_{2}}^{3}\tilde{b}(z_{1}-y)\tilde{c}(z_{2}-y)+\delta_{B_{1}}%
^{3}\delta_{B_{2}}^{2}\tilde{c}(z_{1}-y)\nonumber
\end{align}
(see also Fig. \ref{fj}). Below we use this formula to calculate the bound
state form factor.

\paragraph{The $O(6)$ form factor:}

The $O(6)$ Noether current transforms as an antisymmetric $O(N)$ tensor with
weights $w^{J}=(1,1,0)$. The bound state formula (\ref{Fn}) applied to
(\ref{FK},\ref{K}) with the p-function (\ref{pJ}) yields the $O(6)$ current
form factor for $n/2$ particles. In particular we consider the case $n=4$.

\begin{proposition}
The bound state formula (\ref{Fn}) for $n=4$ and (\ref{FK}, \ref{K}) with the
p-function (\ref{pJ}) yield the two particle $O(6)$ form factor of the
pseudo-potential $J^{\alpha\beta}(x)$ and the current $J_{\mu}^{\alpha\beta
}(x)=\epsilon_{\mu\nu}\partial^{\nu}J^{\alpha\beta}(x)$%
\begin{align}
F_{\alpha_{1}\alpha_{2}}^{O(6),J^{\alpha\beta}}\left(  \theta_{1},\theta
_{2}\right)   &  =im\left(  \delta_{\alpha_{1}}^{\alpha}\delta_{\alpha_{2}%
}^{\beta}-\delta_{\alpha_{1}}^{\beta}\delta_{\alpha_{2}}^{\alpha}\right)
\frac{1}{\cosh\frac{1}{2}\theta_{12}}F_{-}^{O(6)}\left(  \theta\right)
\label{FJ}\\
F_{\alpha_{1}\alpha_{2}}^{O(6),J_{\mu}^{\alpha\beta}}\left(  \theta_{1}%
,\theta_{2}\right)   &  =i\left(  \delta_{\alpha_{1}}^{\alpha}\delta
_{\alpha_{2}}^{\beta}-\delta_{\alpha_{1}}^{\beta}\delta_{\alpha_{2}}^{\alpha
}\right)  \bar{v}(\theta_{1})\gamma_{\mu}u(\theta_{2})F_{-}^{O(6)}\left(
\theta\right)  \label{FJmu}%
\end{align}
which agrees with the results of \cite{BFK8}.
\end{proposition}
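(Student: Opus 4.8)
The plan is to compute the right-hand side of the bound state formula~(\ref{Fn}) explicitly for the case $n=4$, starting from the known $SU(4)$ current form factor, and to verify that the result reproduces the claimed $O(6)$ expressions~(\ref{FJ}) and~(\ref{FJmu}). First I would assemble the four-particle $SU(4)$ form factor $F^{SU(4)}_{\underline{A}}(\underline{\theta})$ from~(\ref{FK}) and~(\ref{K}) with the explicit Bethe state of~(\ref{K1}); for $L=0$ the nesting numbers are $n=4,\,n_1=2,\,n_2=1,\,n_3=0$, so the off-shell Bethe ansatz reduces to the single $y$-integral in~(\ref{K1}) wrapped inside the two $z$-integrals of~(\ref{K}), with the p-function~(\ref{pJ}). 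This is a finite, concrete multiple contour integral, and the main labor is to carry out the residues $\operatorname*{Res}_{\theta_{12}=\frac12 i\pi}\operatorname*{Res}_{\theta_{34}=\frac12 i\pi}$ demanded by~(\ref{Fn}).

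Rather than evaluating the full $SU(4)$ form factor and then taking residues, I would work at the level of K-functions using the identity~(\ref{Kn}) already proved in Lemma~\ref{l1}, which isolates the residue operation on $K^{SU(4)}_{\underline{A}}$ and supplies the correct minimal-form-factor prefactor via~(\ref{F46}). Thus I would compute $\operatorname*{Res}_{\theta_{12}=\frac12 i\pi}\operatorname*{Res}_{\theta_{34}=\frac12 i\pi}K^{SU(4)}_{\underline{A}}(\underline{\theta})$ at the fused rapidities $\theta_{2i-1}=\omega_i+\frac14 i\pi,\ \theta_{2i}=\omega_i-\frac14 i\pi$, then multiply by the product of $\tilde\phi$-factors in~(\ref{Kn}) and by $\prod F(\theta_{ij})$ to rebuild $F^{O(6)}$. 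Contracting the open $SU(4)$ indices $\underline{A}=(A_1A_2A_3A_4)$ with the total intertwiner $\Gamma^{\underline\alpha}_{\underline A}=\Gamma^{\alpha_1}_{A_1A_2}\Gamma^{\alpha_2}_{A_3A_4}$ of~(\ref{inter1}) and applying the map $M^\alpha_{(RS)}$ of~(\ref{map}) then projects the result onto the $O(6)$ vector indices $\alpha_1,\alpha_2$. The antisymmetric tensor structure $\delta^\alpha_{\alpha_1}\delta^\beta_{\alpha_2}-\delta^\beta_{\alpha_1}\delta^\alpha_{\alpha_2}$ should emerge directly from the antisymmetry of $\Gamma^{(RS)}_{AB}$ in~(\ref{inter1}), and the scalar part should collapse to $1/\cosh\frac12\theta_{12}$ times $F^{O(6)}_-(\theta)$ after using the relation~(\ref{Fpm06}) between the full and minimal $O(6)$ two-particle form factors.

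The step from~(\ref{FJ}) to~(\ref{FJmu}) is the passage from the pseudo-potential $J^{\alpha\beta}$ to the physical current $J^{\alpha\beta}_\mu=\epsilon_{\mu\nu}\partial^\nu J^{\alpha\beta}$. Here I would simply act with the derivative, which in momentum space multiplies by the total momentum components $p_1+p_2$; the factor $1/\sum e^{\theta_i}$ sitting in the p-function~(\ref{pJ}) is precisely what converts the scalar prefactor $m/\cosh\frac12\theta_{12}$ into the spinor bilinear $\bar v(\theta_1)\gamma_\mu u(\theta_2)$, as in the analogous $SU(4)$ computation of subsection~4.3 of~\cite{BFK1}. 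I would verify the kinematical identity $\epsilon_{\mu\nu}(p_1+p_2)^\nu\,\tfrac{m}{\cosh\frac12\theta_{12}}=\bar v(\theta_1)\gamma_\mu u(\theta_2)$ (up to the overall normalization) using the standard on-shell spinor parametrization in $1+1$ dimensions.

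The main obstacle I anticipate is the bookkeeping of the double residue in the off-shell Bethe ansatz: taking $\operatorname*{Res}_{\theta_{12}=\frac12 i\pi}$ and $\operatorname*{Res}_{\theta_{34}=\frac12 i\pi}$ requires tracking how the integration contours $\mathcal{C}^{(k)}_{\underline\theta}$ of Fig.~\ref{f2} pinch the integration variables $z_j$ (and the inner variable $y$) against the poles of $\tilde\phi$ and $\tau$, so that the contour integrals localize and the residues are picked up consistently. Getting the pole structure and the induced prefactors right—so that~(\ref{F46}) and the $\tilde\phi$-products in~(\ref{Kn}) assemble into exactly $F^{O(6)}_-(\theta)/\cosh\frac12\theta_{12}$ with the correct power of $2$—is where the calculation is delicate; the tensor and spinor structures, by contrast, should follow essentially by symmetry. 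Since this is the kind of residue computation flagged in the excerpt as belonging to the appendix, I would relegate the detailed contour manipulations there and present only the resulting contractions in the main text.
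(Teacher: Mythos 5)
Your proposal follows essentially the same route as the paper's proof: the paper likewise works at the level of the K-function, obtaining the double residue by the pinching mechanism at $v_1=u_2$, $v_2=u_4$ (where the S-matrices degenerate to permutation operators and the fusion intertwiners emerge via Yang--Baxter), then evaluates the leftover inner $y$-integral in closed form using the Gauss formula (\ref{Gauss}), and finally assembles (\ref{FJ}) from (\ref{F46}), (\ref{Fmin6}) and (\ref{Fpm06}), exactly as you outline. The only step you leave implicit --- the explicit evaluation of that remaining $y$-integral, done in the paper via ${}_2F_1(a,b;c;1)$ --- is a matter of execution rather than a different idea, so your plan coincides with the paper's argument.
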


\begin{proof}
We have $n=4,\,n_{1}=2,\,n_{2}=1$ and $n_{3}=0$. For convenience we use here
the variables $u,v,w$ with $\theta=i\pi\frac{1}{2}u,~z=i\pi\frac{1}%
{2}v,~y=i\pi\frac{1}{2}w$ and calculate (always up to constants)%
\begin{align}
\operatorname*{Res}_{u_{12}=1}\operatorname*{Res}_{u_{34}=1}K_{\underline{A}%
}^{SU(4),J}(\underline{u}) &  =\operatorname*{Res}_{u_{12}=1}%
\operatorname*{Res}_{u_{34}=1}\int_{\mathcal{C}_{\underline{u}}}d\underline
{v}h\left(  \underline{u},\underline{v}\right)  p^{J}(\underline{u}%
,\underline{v})\Psi_{\underline{A}}(\underline{u},\underline{v})\label{Kj}\\
&  =\operatorname*{Res}_{u_{12}=1}\operatorname*{Res}_{u_{34}=1}%
\operatorname*{Res}_{v_{1}=u_{2}}\operatorname*{Res}_{v_{2}=u_{4}}h\left(
\underline{u},\underline{v}\right)  p^{J}(\underline{u},\underline{v}%
)\Psi_{\underline{A}}(\underline{u},\underline{v})\nonumber
\end{align}
because the residues are obtained by pinchings at $v_{1}=u_{2},v_{2}=u_{4}$
which imply that the S-matrices $S(u_{2}-v_{1})$ and $S(u_{4}-v_{2})$ are
replaced by the permutation operator (see Fig. \ref{fj}). 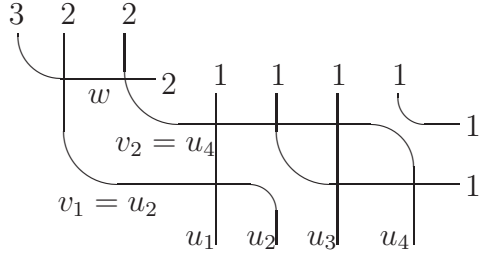
\begin{figure}[th]%
\[
\unitlength4mm\begin{picture}(15,8.5)(13.5,2)
\put(20,2){\line(0,1){5}}
\put(21,2){\oval(2,4)[rt]}
\put(25,2){\oval(3,8)[rt]}
\put(25,9){\oval(16,6)[lb]}
\put(21,9){\oval(12,10)[lb]}
\put(18,9){\oval(9,3)[lb]}
\put(24,2){\line(0,1){5}}
\put(28,7){\oval(12,6)[lb]}
\put(28,7){\oval(4,2)[lb]}
\put(19,2){$u_1$}
\put(21,2){$u_2$}
\put(23,2){$u_3$}
\put(25.4,2){$u_4$}
\put(14.8,3.1){$v_1=u_2$}
\put(16.7,5.1){$v_2=u_4$}
\put(15.8,6.7){$w$}
\put(14.9,9.3){2}
\put(16.9,9.3){2}
\put(18.2,7.){2}
\put(13.2,9.3){3}
\put(28.2,3.6){1}
\put(28.2,5.6){1}
\put(19.9,7.3){1}
\put(21.8,7.3){1}
\put(23.8,7.3){1}
\put(25.8,7.3){1}
\end{picture}~
\]
\caption{The Bethe state $\Psi_{\underline{A}}(\underline{u},\underline{v})$
in (\ref{Kj}) for $v_{1}=u_{2},v_{2}=u_{4}$.}%
\label{fj}%
\end{figure}Using Yang-Baxter relations and the formula for the fusion
intertwiner (\ref{inter}) we obtain%
\[
\operatorname*{Res}_{u_{12}=1}\operatorname*{Res}_{u_{34}=1}K_{\underline{A}%
}^{SU(4),J}(\underline{u})=\tilde{\phi}(u_{14})\tilde{\phi}(u_{32}%
)K_{\underline{B}}^{(1)}(u_{24})\tilde{b}(u_{14})p^{J}\left(  \underline
{u},u_{2},u_{4}\right)  \left(  \Gamma_{\alpha}^{B_{1}1}\Gamma_{A_{1}A_{2}%
}^{\alpha}\Gamma_{\beta}^{B_{2}1}\Gamma_{A_{3}A_{4}}^{\beta}\right)  .
\]
With (\ref{K1}) we have (again up to constants)%
\begin{align*}
K_{32}^{(1)}(\underline{v}) &  =\int_{\mathcal{C}_{\underline{v}}}%
dw\Gamma(-\tfrac{1}{4}(v_{1}-w))\Gamma(-\tfrac{1}{4}+\tfrac{1}{4}%
(v_{1}-w))\Gamma(-\tfrac{1}{4}(v_{2}-w))\Gamma(\tfrac{3}{4}+\tfrac{1}{4}%
(v_{2}-w))\\
&  =\Gamma\left(  \tfrac{3}{4}-\tfrac{1}{4}v_{12}\right)  \Gamma\left(
-\tfrac{1}{4}+\tfrac{1}{4}v_{12}\right)
\end{align*}
where $\tilde{\phi}(v_{1}-w)\tilde{c}(v_{1}-w)\propto\Gamma(-\tfrac{1}%
{4}(v_{1}-w))\Gamma(-\tfrac{1}{4}+\tfrac{1}{4}(v_{1}-w))$ and the Gauss
formula%
\begin{equation}
_{2}F_{1}(a,b;c;1)=\sum_{n=0}^{\infty}\frac{1}{n!}\frac{\Gamma\left(
a+n\right)  }{\Gamma\left(  a\right)  }\frac{\Gamma\left(  b+n\right)
}{\Gamma\left(  b\right)  }\frac{\Gamma\left(  c\right)  }{\Gamma\left(
c+n\right)  }=\frac{\Gamma\left(  c\right)  \Gamma\left(  c-a-b\right)
}{\Gamma\left(  c-a\right)  \Gamma\left(  c-b\right)  }\label{Gauss}%
\end{equation}
have been used. Similarly, we calculate $K_{23}^{(1)}(\underline{v})$ and get
$K_{23}^{(1)}(\underline{v})=-K_{32}^{(1)}(\underline{v})$. Finally using
(\ref{pJ})
\begin{align*}
&  \operatorname*{Res}_{u_{12}=1}\operatorname*{Res}_{u_{34}=1}K_{\underline
{A}}^{SU(4),J}(\underline{u})=K^{J}(o)\left(  \Gamma_{A_{1}A_{2}}^{(12)}%
\Gamma_{A_{3}A_{4}}^{(13)}-\Gamma_{A_{1}A_{2}}^{(13)}\Gamma_{A_{3}A_{4}%
}^{(12)}\right)  \\
K^{J}(o) &  =\tilde{\phi}(u_{14})\tilde{\phi}(u_{32})\Gamma\left(  \tfrac
{3}{4}-\tfrac{1}{4}u_{24}\right)  \Gamma\left(  -\tfrac{1}{4}+\tfrac{1}%
{4}u_{24}\right)  \tilde{b}(u_{14})p^{J}\left(  \underline{u},u_{2}%
,u_{4}\right)  \\
&  =\frac{1}{\sin\frac{1}{2}\pi o}\left(  \Gamma\left(  \tfrac{3}{4}-\tfrac
{1}{4}o\right)  \Gamma\left(  -\tfrac{1}{4}+\tfrac{1}{4}o\right)  \right)
^{2}%
\end{align*}
with $o=u_{(12)(34)}=u_{13}=u_{24}=u_{14}-\tfrac{1}{2}=u_{23}+\tfrac{1}{2}$.
The result (\ref{FJ}) follows then from (\ref{F46}), (\ref{Fmin6}) and
(\ref{Fpm06}).
\end{proof}

\subsection{The iso-scalar operator $\mathcal{O}$}

\label{sscalar}

The $SU(4)~n$-particle form factor for the iso-scalar operator $\mathcal{O}%
(x)$ with weights $w^{\mathcal{O}}=\left(  0,0,0,0\right)  $ is given by
(\ref{FK}) and the nested `off-shell' Bethe ansatz (\ref{K}). The numbers of
\textquotedblleft weight flip\textquotedblright\ operators in the various
levels of the nested Bethe ansatz are given by (\ref{w}) as $n=4+4L,\,n_{1}%
=3+3L,\,n_{2}=2+2L,\,n_{3}=1+L$. We propose for the iso-scalar operator
$\mathcal{O}(x)$ the p-function
\begin{equation}
p^{\mathcal{O}}\left(  \underline{\theta},\underline{\underline{z}}\right)
=e^{\frac{1}{2}\sum\theta_{i}-\sum z_{i}^{(1)}+\sum z_{i}^{(3)}}-1. \label{ps}%
\end{equation}
With this p-function in (\ref{K}) the form factor equations (i) - (v) of
(\ref{1.10}) - (\ref{1.16}) hold with statistics factor $\sigma^{\mathcal{O}%
}=-1$ and spin $s^{\mathcal{O}}=0$. The bound state formula (\ref{Fn}) applied
to (\ref{FK}) and (\ref{K}) with the p-function (\ref{ps}) yields the $O(6)$
form factor of the operator $\bar{\psi}\psi(x)$ for $n/2$ particles. In
particular we consider the case $L=0$, i.e $n=4,\,n_{1}=3,\,n_{2}=2$ and
$n_{3}=1$.

\begin{proposition}
The bound state formula (\ref{Fn}) applied to (\ref{FK}) and (\ref{K}) with
the p-function (\ref{ps}) yields the two particle $O(6)$ form factors of
$\bar{\psi}\psi$%
\begin{equation}
F_{\alpha_{1}\alpha_{2}}^{\bar{\psi}\psi}(\underline{\theta})=\langle
\,0\,|\,\bar{\psi}\psi(0)\,|\,p_{1},p_{2}\,\rangle_{\alpha_{1}\alpha_{2}}%
^{in}=\mathbf{C}_{\alpha_{1}\alpha_{2}}\,\bar{v}(\theta_{1})u(\theta
_{2})\,F_{0}(\theta_{12}) \label{EMN}%
\end{equation}
which means for the energy momentum operator $T^{\mu\nu}$%
\[
F_{\alpha_{1}\alpha_{2}}^{T^{\mu\nu}}(\underline{\theta})=\langle
\,0\,|\,T^{\mu\nu}(0)\,|\,p_{1},p_{2}\,\rangle_{\alpha_{1}\alpha_{2}}%
^{in}=\mathbf{C}_{\alpha_{1}\alpha_{2}}\bar{v}(\theta_{1})\gamma^{\mu}%
u(\theta_{2})\,\tfrac{1}{2}(p_{1}^{\nu}-p_{2}^{\nu})\,F_{0}(\theta_{12})
\]
with $F_{0}(\theta)$ given by (\ref{Fmin6}) and (\ref{Fpm06}) which agrees
with the results of \cite{BFK8}.
\end{proposition}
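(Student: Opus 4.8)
The plan is to follow the template of the preceding proposition for the current, the only essential differences being the p-function (\ref{ps}) in place of (\ref{pJ}) and the deeper nesting $n=4,\,n_{1}=3,\,n_{2}=2,\,n_{3}=1$. As there, I would pass to the rescaled variables $\theta=\tfrac{1}{2}i\pi u,\ z=\tfrac{1}{2}i\pi v,\ y=\tfrac{1}{2}i\pi w$ and start from the bound state formula (\ref{Fn}) at the level of K-functions, computing the double residue $\operatorname{Res}_{u_{12}=1}\operatorname{Res}_{u_{34}=1}K_{\underline{A}}^{SU(4),\mathcal{O}}(\underline{u})$ by means of (\ref{Kn}). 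The residues again arise from pinchings of the level-one contour integral, at which two of the S-matrices degenerate to permutation operators; by the Yang--Baxter relation and the definition (\ref{inter}) of the fusion intertwiner, the four fundamental $SU(4)$ particles fuse into two antisymmetric tensor bound states, which are mapped to $O(6)$ vector indices through (\ref{map}).

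Next I would evaluate the surviving nested Bethe-ansatz integrals. Since now $n_{1}=3,\,n_{2}=2,\,n_{3}=1$, there are three nesting levels rather than two, so after the pinchings one is left with a level-one ($SU(3)$) K-function carrying the remaining integrations, a level-two ($SU(2)$) K-function, and a trivial level-three integral. Each of these is a contour integral of products of $\Gamma$-functions of the type already met in the current calculation, and I expect each to be evaluated in closed form by the Gauss summation (\ref{Gauss}) for ${}_{2}F_{1}(a,b;c;1)$, exactly as $K_{32}^{(1)}(\underline{v})$ was reduced there. The $-1$ term in the p-function (\ref{ps}) together with the factor $e^{+\sum z_{i}^{(3)}}$ (as opposed to the current's $e^{-\sum z_{i}^{(3)}}$ and its denominator $\sum e^{\theta_{i}}$) changes the scalar prefactor, and I would track these through to the final scalar function of $o=u_{(12)(34)}$.

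I would then assemble the tensor structure. Whereas the current fuses the two bound states into the antisymmetric $O(6)$ combination $\delta_{\alpha_{1}}^{\alpha}\delta_{\alpha_{2}}^{\beta}-\delta_{\alpha_{1}}^{\beta}\delta_{\alpha_{2}}^{\alpha}$, here the iso-scalar weights $w^{\mathcal{O}}=(0,0,0,0)$ force the two bound states into the $SU(4)$ singlet, which under (\ref{map}) becomes the $O(6)$ charge conjugation matrix $\mathbf{C}_{\alpha_{1}\alpha_{2}}$; this is the signature of the $S_{0}^{O(6)}$ (trace) channel and is precisely why the answer must carry $F_{0}$ rather than $F_{-}$. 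Finally, converting the $SU(4)$ K-function residue into the $O(6)$ K-function via (\ref{Kn}) and (\ref{F46}), and restoring the minimal factors $\prod F(\theta_{ij})$ through (\ref{FK}), I would compare the resulting two-particle function with the minimal form factor $F_{0}^{O(6)}$ of (\ref{Fmin6})--(\ref{Fpm06}) to reach (\ref{EMN}); the energy-momentum statement then follows from the standard spinor parametrization and the relation $T^{\mu\nu}\sim\gamma^{\mu}(p_{1}^{\nu}-p_{2}^{\nu})$.

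The main obstacle I anticipate is the bookkeeping of the extra nesting level: with three levels the pinching analysis and the nested integrals are combinatorially heavier than in the current case, and one must verify that the several ${}_{2}F_{1}(\cdot,\cdot;\cdot;1)$ evaluations conspire to give exactly the $S_{0}$-channel scalar matching $F_{0}$, with the correct overall constant $\mathbf{C}_{\alpha_{1}\alpha_{2}}\,\bar{v}(\theta_{1})u(\theta_{2})$ and the singlet tensor rather than a spurious admixture of the $S_{\pm}$ channels.
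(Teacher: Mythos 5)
Your outline has the right skeleton and matches the paper's strategy in broad terms: pass to rescaled variables, take the double residue at the level of K-functions, obtain it by pinching of the level-one integration variables, and let the singlet tensor structure produce $\mathbf{C}_{\alpha_{1}\alpha_{2}}$, hence the $S_{0}$-channel function $F_{0}$. (A packaging difference: the paper does not work directly at $\nu=1/2$ but proves the more general Lemma \ref{L3} for the modified K-functions at arbitrary $\nu$ in Appendix \ref{a2}, and the Proposition's one-line proof in the text is just the specialization $\nu=1/2$ of that computation.)

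The genuine gap is your claim that each surviving integral can be evaluated \textquotedblleft in closed form by the Gauss summation (\ref{Gauss}), exactly as $K_{32}^{(1)}$ was reduced\textquotedblright\ in the current computation. That works only for the single-variable integrations: the level-three $x$-integral inside $K^{(2)}$ and the leftover level-one $v_{3}$-integral, which the paper evaluates with a Barnes-type lemma for $\int dz\,\Gamma(a-z)\Gamma(b-z)\Gamma(c+z)\Gamma(d+z)$ (and, to accommodate the p-function (\ref{p4s}), a generalization of that lemma with a periodic factor $f(z)$). After these steps one is left with a genuinely \emph{coupled} two-fold contour integral over the level-two variables $w_{1},w_{2}$, whose integrand carries the factor $\Psi(w_{12})$ built from $\tau_{\nu}(w_{12})$ and the leftover $\tilde{b},\tilde{c}$ amplitudes; it does not factorize into two independent ${}_{2}F_{1}(1)$ evaluations, so no iteration of Gauss's theorem applies. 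The paper evaluates this integral indirectly: it shows that the resulting scalar $Y(o)$ satisfies the two-particle Watson equations (\ref{wat}) with the eigenvalue $\tilde{S}_{0}^{O(6)}$ of (\ref{SO6}), identifies the corresponding minimal solution, and fixes the non-minimal pole factors by the double pole at $o=1$ produced by the pinching $w_{1}\rightarrow v_{1}-2/\nu$, $w_{2}\rightarrow v_{2}$, all anchored on the identity (\ref{KKK}) taken from (C.10) of \cite{BFK8} (and additionally checked with Mathematica). Without this functional-equation/minimality argument, or the imported BFK8 identity, your plan stalls precisely at the point you yourself flag as the \textquotedblleft main obstacle\textquotedblright: proving that the nested integrals conspire to give the $S_{0}$-channel scalar multiplying $\bar{v}(\theta_{1})u(\theta_{2})\,F_{0}(\theta_{12})$.
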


\begin{proof}
The more general proof in Appendix \ref{a2} implies for $\nu=1/2$%
\[
\operatorname*{Res}_{\theta_{12}=\frac{1}{2}i\pi}\operatorname*{Res}%
_{\theta_{34}=\frac{1}{2}i\pi}F_{1234}^{SU(4),\mathcal{O}}(\theta_{1}%
,\ldots,\theta_{4})=const.\,\frac{\Gamma\left(  \frac{3}{4}-\frac{1}{2}%
\frac{\omega}{i\pi}\right)  \Gamma\left(  -\frac{1}{4}+\frac{1}{2}\frac
{\omega}{i\pi}\right)  }{\Gamma\left(  \frac{3}{2}-\frac{1}{2}\frac{\omega
}{i\pi}\right)  \Gamma\left(  \frac{1}{2}+\frac{1}{2}\frac{\omega}{i\pi
}\right)  }F^{O(6)}\left(  \omega\right)  .
\]
with $\omega=\theta_{(12)}-\theta_{(34)}$. Together with (\ref{Fmin6}) and
(\ref{Fpm06}) the claim (\ref{EMN}) follows.
\end{proof}

\subsection{The $\mathbf{O(6)}$ Gross-Neveu field $\mathbf{\psi(x)}$}

\paragraph{The $SU(4)$ form factor:}

We follow \cite{BFK1} and define the $SU(4)$ operator $\mathcal{O}%
^{AB}=\left[  \psi^{A},\psi^{B}\right]  $ where $\psi^{A}(x)$ is the
fundamental field of the chiral $SU(4)$-Gross-Neveu model. It has the weight
vector $w^{\mathcal{O}}=(1,1,0,0)$. We write the highest weight component
$\left[  \psi^{1},\psi^{2}\right]  $ as $\mathcal{O}$ and propose the
p-function (see subsection 4.2 of \cite{BFK1})%
\begin{equation}
p^{\mathcal{O}^{(\pm)}}(\underline{\theta},\underline{z})=\left(
p^{\psi^{(\pm)}}(\underline{\theta},\underline{z})\right)  ^{2}=e^{\pm\left(
\sum_{i=1}^{m}z_{i}-\frac{3}{4}\sum_{i=1}^{n}\theta_{i}\right)  } \label{pf}%
\end{equation}
belonging to the $\pm$ spinor components. The form factors are again given by
(\ref{FK}) and (\ref{K}). The numbers of \textquotedblleft weight
flip\textquotedblright\ operators in the various levels of the nested Bethe
ansatz are given by (\ref{w}) as $n=2+4L,\,n_{1}=1+3L,\,n_{2}=2L,\,n_{3}=L$.

\paragraph{The $O(6)$ form factor:}

The fundamental $O(6)$ field $\psi^{\alpha}$ is fermionic and transforms as
the vector representation with weight vector $w^{\psi}=(1,0,0)$ \cite{BFK8}.
It is given in terms of $\mathcal{O}^{AB}$ by (\ref{inter1}) and (\ref{map})%
\[
\psi^{\alpha}=M_{(RS)}^{\alpha}\Gamma_{AB}^{(RS)}\left[  \psi^{A},\psi
^{B}\right]  \,.
\]
The bound state formula (\ref{Fn}) applied to (\ref{FK}) and (\ref{K}) with
the p-function (\ref{pf}) yields the $O(6)$ form factor for $n/2$ particles.
In particular we consider the case $L=0$, i.e. $n=2,~m=1$%
\begin{align*}
\operatorname*{Res}_{\theta_{12}=i\pi2/3}K_{\underline{A}}^{SU(4),\mathcal{O}%
^{(\pm)}}(\underline{\theta})  &  =\operatorname*{Res}_{\theta_{12}=i\pi
2/3}\int_{\mathcal{C}_{\underline{\theta}}}dz\,\tilde{\phi}\left(  \theta
_{1}-z\right)  \tilde{\phi}\left(  \theta_{2}-z\right)  e^{\pm\left(
z-\tfrac{3}{4}\left(  \theta_{1}+\theta_{2}\right)  \right)  }\,\tilde{\Psi
}_{\underline{A}}(\underline{\theta},z)\\
&  =\tilde{\phi}\left(  \theta_{12}\right)  e^{\pm\left(  \theta_{2}-\tfrac
{3}{4}\left(  \theta_{1}+\theta_{2}\right)  \right)  }\operatorname*{Res}%
_{\theta_{12}=i\pi2/3}\tilde{S}_{A_{1}A_{2}}^{21}\left(  \theta_{12}\right)
\end{align*}
where pinching at $z=\theta_{2}$ was used. Therefore the $O(6)$ one particle
form factor of the field is with $\theta=\frac{1}{2}\left(  \theta_{1}%
+\theta_{2}\right)  $ (up to const.)%
\[
F_{1}^{O(6),\psi^{(\pm)}}(\theta)=e^{\mp\frac{1}{2}\theta}=u^{(\pm)}(\theta)
\]
as expected.

\section{$\mathbf{O(6)\simeq SU(4)}$ as a start of level iteration for
$\mathbf{O(N)}$}

\label{s5}

\subsection{The modified $\mathbf{n}$-particle K-function for $\mathbf{O(6)}$}

The $O(N)$ Gross-Neveu form factors are given by the `off-shell'
\textbf{nested} Bethe ansatz \cite{BFK8}. Therefore we need the higher level
$O(N-2k)$ Bethe ansatz for $k=1,\dots,N/2-3$. The last one is of $O(6)$ type.
For this discussion it is convenient to introduce the variables $u,v$ with
$\theta=i\pi\nu_{k}u,~z=i\pi\nu_{k}v$ with $\nu_{k}=2/(N-2k-2)$. For the
$O(N-2k)$ S-matrix $S^{(k)}(u)$ we write
\begin{align}
\tilde{S}^{(k)}(u)  &  =S^{(k)}/S_{+}^{(k)}=\tilde{b}(u)\mathbf{1}+\tilde
{c}(u)\mathbf{P}+\tilde{d}_{k}(u)\mathbf{K}\label{Su}\\
\tilde{b}(u)  &  =\frac{u}{u-1},~\tilde{c}(u)=\frac{-1}{u-1},~\tilde{d}%
_{k}(u)=\frac{u}{u-1}\frac{1}{u-1/\nu_{k}}\,.\nonumber
\end{align}
and define the higher level K-functions%
\begin{align}
K_{\underline{\alpha}}^{(k)}(\underline{u})  &  =\tilde{N}_{m_{k}}^{(k)}%
\int_{\mathcal{C}_{\underline{u}}^{(1)}}dv_{1}\cdots\int_{\mathcal{C}%
_{\underline{u}}^{(m_{k})}}dv_{m_{k}}\,h(\underline{u},\underline{v}%
)p^{(k)}(\underline{u},\underline{v})\,\,\tilde{\Psi}_{\underline{\alpha}%
}^{(k)}(\underline{u},\underline{v})\label{Kk}\\
\tilde{\Psi}_{\underline{\alpha}}^{(k)}(\underline{u},\underline{v})  &
=K_{\underline{\mathring{\beta}}}^{(k+1)}(\underline{v})\,\big(\tilde{\Phi
}^{(k)}\big)_{\underline{\alpha}}^{\underline{\mathring{\beta}}}(\underline
{u},\underline{v})\nonumber
\end{align}
with $\underline{u}=u_{1},\dots,u_{n_{k}},~\underline{v}=v_{1},\dots,v_{m_{k}%
}$ and $m_{k}=n_{k+1}$. The basic Bethe ansatz co-vectors $\big(\tilde{\Phi
}^{(k)}\big)_{\underline{\alpha}}^{\underline{\mathring{\beta}}}(\underline
{u},\underline{v})$ are defined analogously to (\ref{1.38}). The function
$h(\underline{u},\underline{v})$ is given by (\ref{h}) and (\ref{tau}) where
$\tilde{\phi}\left(  \theta\right)  $ is replaced by
\[
\tilde{\phi}_{\nu}\left(  \theta\right)  =\Gamma\left(  1-\tfrac{1}{2}%
\nu+\tfrac{1}{2\pi i}\theta\right)  \Gamma\left(  -\tfrac{1}{2\pi i}%
\theta\right)  ,~\nu=\nu_{0}=2/(N-2)
\]

The higher level K-functions $K_{\underline{\alpha}}^{(k)}(\underline{u})$ for
$k>0$ satisfy the equations

\begin{itemize}
\item[(i)$^{(k)}$]
\begin{equation}
K_{\dots ij\dots}^{(k)}(\dots,u_{i},u_{j},\dots)=K_{\dots ji\dots}^{(k)}%
(\dots,u_{j},u_{i},\dots)\,\tilde{S}_{ij}^{(k)}(u_{ij}) \label{ik}%
\end{equation}

\item[(ii)$^{(k)}$]
\begin{equation}
K_{1\ldots n_{k}}^{(k)}(u_{1}+2/\nu,u_{2},\dots,u_{n_{k}})\sigma
_{1}^{\mathcal{O}}\mathbf{C}^{\bar{1}1}=K_{2\ldots n_{k}1}^{(k)}(u_{2}%
,\dots,u_{n_{k}},u_{1})\mathbf{C}^{1\bar{1}} \label{iik}%
\end{equation}

\item[(iii)$^{(k)}$]
\begin{equation}
\operatorname*{Res}_{u_{12}=1/\nu_{k}}K_{1\dots n_{k}}^{(k)}(u_{1}%
,\dots,u_{n_{k}})=\prod_{i=3}^{n_{k}}\tilde{\phi}_{\nu}(u_{i1}+1)\tilde{\phi
}_{\nu}(u_{i2})\mathbf{C}_{12}K_{3\dots n_{k}}^{(k)}(u_{3},\dots,u_{n_{k}})\,.
\label{iiik}%
\end{equation}

\end{itemize}

The normal form factor equations (i) - (iii) for $O(N-2k)$ are similar to
these higher level equations. There are, however, two differences:

\begin{enumerate}
\item The shift in (ii)$^{(k)}$ is the one of $O(N)$ but not that of $O(N-2k)$.

\item There is only one term on the right hand side in (iii)$^{(k)}$.
\end{enumerate}

In particular for $k=N/2-3=1/\nu-2$ we have $\nu_{k}=\frac{1}{2}$ and
$K_{\underline{\alpha}}^{(k)}(\underline{u})=K_{\underline{\alpha}}^{O(6),\nu
}(\underline{u})$ is of $O(N-2k)=O(6)$ type, which means in particular that
the S-matrix and the Bethe state are the ones of $O(6)$. We call
$K_{\underline{\alpha}}^{O(6),\nu}$ \textbf{a modified $\mathbf{O(6)}$
K-function}.

\subsection{The modified $\mathbf{n}$-particle K-function for $\mathbf{SU(4)}$}

Replacing in (\ref{K}) and (\ref{phi4})
\[
\tilde{\phi}\left(  \theta\right)  \rightarrow\tilde{\phi}_{\nu}\left(
\theta\right)  =\Gamma\left(  1-\tfrac{1}{2}\nu+\tfrac{1}{2\pi i}%
\theta\right)  \Gamma\left(  -\tfrac{1}{2\pi i}\theta\right)
\]
we obtain the \textbf{modified }$\mathbf{n}$\textbf{-particle K-function for
}$\mathbf{SU(4)}$ which satisfies the form factor equation (ii) (see
(\ref{1.12})) not for the shift $\theta_{1}\rightarrow\theta_{1}+2\pi i$ but
for $\theta_{1}\rightarrow\theta_{1}+i\pi/\nu$ and in (iii) (see (\ref{1.14}))
the second term on the right hand side is missing. Again we use for
convenience the variables $u$ and $v$ with $\theta=i\pi\nu u,~z=i\pi\nu v$,
then the K-function
\begin{equation}
K_{\underline{A}}^{SU(4),\nu}(\underline{u},\underline{\nu})=\int
_{\mathcal{C}_{\underline{u},\nu}}d\underline{v}\prod_{i=1}^{n}\prod_{j=1}%
^{m}\tilde{\phi}_{\nu}(u_{i}-v_{j})\prod_{i<j}\tau_{\nu}(v_{ij})p(\underline
{u},\underline{v})\tilde{\Psi}_{\underline{A}}(\underline{u},\underline{v})
\label{K4nu}%
\end{equation}
satisfies for $\nu<\frac{1}{2}$ not the form factor equations (ii) and (iii)
of (\ref{1.10}) - (\ref{1.18}) but the modified ones

\begin{itemize}
\item[(ii)$_{\nu}$]
\begin{equation}
K_{1\ldots n}^{SU(4),\nu}(u_{1}+2/\nu,u_{2},\dots,u_{n})\sigma_{1}%
^{\mathcal{O}}\mathbf{C}^{\bar{1}1}=K_{2\ldots n1}^{SU(4),\nu}(u_{2}%
,\dots,u_{n},u_{1})\mathbf{C}^{1\bar{1}} \label{iinu}%
\end{equation}

\item[(iii)$_{\nu}$]
\begin{equation}
\operatorname*{Res}_{u_{34}=1}\operatorname*{Res}_{u_{23}=1}%
\operatorname*{Res}_{u_{12}=1}K_{1\ldots n}^{SU(4),\nu}(\underline{u}%
)=\prod_{i=5}^{n}\prod_{j=2}^{4}\tilde{\phi}_{\nu}(u_{ij})\,\varepsilon
_{1234}K_{5\dots n}^{SU(4),\nu}(\underline{\check{u}}) \label{iiinu}%
\end{equation}
with $\underline{\check{u}}=u_{5},\dots,u_{n}$.
\end{itemize}

The proofs of these equations are quite analogous to the ones in \cite{BFK1}
for the normal $SU(N)$ K-functions for $N=4$.

\begin{figure}[tbh]%
\[
\unitlength3.5mm\begin{picture}(26,15)(0.7,-3) \thicklines
\def\ff#1{
\put(.1,9.2){${\bullet~u_{#1}+2/\nu-1}$}
\put(0.,-1.8){${\bullet~u_{#1}-4/\nu}$}
\put(0.16,2.2){${\bullet~u_{#1}-2/\nu}$}
\put(.36,5.1){${\hbox{\circle{.5}}}~{u_{#1}-1}$}
\put(.16,6.2){${\bullet~~u_{#1}}$}
\put(.4,6.55){\hbox{\circle{1.3}}}
\put(.85,7.15){\vector(1,0){0}}
}
\put(1,0){\ff{n}}
\put(8,5){\dots}
\put(12,0){\ff{2}}
\put(20,1){\ff{1}}
\put(9,3.5){\vector(1,0){0}}
\put(0,4){\oval(34,1)[br]}
\put(27,4){\oval(20,1)[tl]}
\end{picture}
\]
\caption{The integration contour $\mathcal{C}_{\underline{u},\nu}$ in
(\ref{K4nu}). The bullets refer to poles of the integrand resulting from
$\,\tilde{\phi}(u_{i}-v_{j})$ and the small open circles refer to poles
originating from $\tilde{b}(u_{i}-v_{j})$ and $\tilde{c}(u_{i}-v_{j})$.}%
\end{figure}

\subsection{$\mathbf{n}^{\prime}\mathbf{=n/2}$ bound states of $\mathbf{SU(4)}%
$ particles:}

We apply the bound state formula (iv) to an $n$-particle modified K-function
of $SU(4)$ and define correspondingly to (\ref{Kn}) for $\nu=2/(N-2)$ an
$n^{\prime}=n/2$-particle $O(6)$ K-function
\begin{equation}
K_{\underline{\alpha}}^{O(6),\nu}(\underline{o})\Gamma_{\underline{A}%
}^{\underline{\alpha}}=\prod\limits_{1\leq i<j\leq n^{\prime}}\frac{1}%
{\tilde{\phi}_{\nu}(-o_{ij})\tilde{\phi}_{\nu}(-o_{ij}+1)}\operatorname*{Res}%
_{u_{12}=1}\ldots\operatorname*{Res}_{u_{n-1n}=1}K_{\underline{A}}^{SU(4),\nu
}(\underline{u}) \label{Knu}%
\end{equation}
with $o_{i}=\frac{1}{2}\left(  u_{2i-1}+u_{2i}\right)  $ and the intertwiner
$\Gamma_{\underline{A}}^{\underline{\alpha}}=\Gamma_{A_{1}A_{2}}^{\alpha_{1}%
}\ldots\Gamma_{A_{n-1}A_{n}}^{\alpha_{n^{\prime}}}$. Correspondingly to lemma
\ref{l1} we prove

\begin{lemma}
\label{l2}The K-function defined by (\ref{Knu}) satisfies the modified form
factor equations

\begin{itemize}
\item[$(i)_{\nu}$]
\[
K_{\dots ij\dots}^{O(6),\nu}(\dots o_{i},o_{j}\dots)=K_{\dots ji\dots
}^{O(6),\nu}(\dots o_{j},o_{i}\dots)\tilde{S}^{O(6)}(o_{ij})
\]

\item[$(ii)_{\nu}$]
\[
K_{12\dots n^{\prime}}^{O(6),\nu}(o_{1}+2/\nu,o_{2},\dots,o_{n^{\prime}%
})\mathbf{C}^{\bar{1}1}=K_{2\dots n^{\prime}1}^{O(6),\nu}(o_{2},\dots
,o_{n^{\prime}},o_{1})\mathbf{C}^{1\bar{1}}%
\]

\item[$(iii)_{\nu}$]
\[
\operatorname*{Res}_{o_{12}=2}K_{1\dots n^{\prime}}^{O(6),\nu}(\underline
{o})=\prod_{i=3}^{n^{\prime}}\tilde{\phi}_{\nu}(o_{i1}+1)\tilde{\phi}_{\nu
}(o_{i2})\mathbf{C}_{12}K_{3\dots n^{\prime}}^{O(6),\nu}(\underline{\check{o}%
})
\]
with $\underline{\check{o}}=o_{3},\ldots o_{n^{\prime}}$.
\end{itemize}
\end{lemma}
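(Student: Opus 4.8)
The plan is to prove Lemma~\ref{l2} in direct parallel to Lemma~\ref{l1}, exploiting that its only difference from the unmodified case is the replacement $\tilde\phi\to\tilde\phi_\nu$ and the modified shift/residue structure encoded in equations (ii)$_\nu$ and (iii)$_\nu$ of the previous subsection. Since the modified $SU(4)$ K-function $K^{SU(4),\nu}_{\underline A}$ satisfies (i) (Watson's equation, unchanged by the $\nu$-deformation), (ii)$_\nu$, and (iii)$_\nu$, and since the bound-state construction (\ref{Knu}) is the \emph{formal analogue} of (\ref{Kn}), the whole proof of Lemma~\ref{l1} should carry over with $\tilde\phi$ systematically replaced by $\tilde\phi_\nu$. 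Concretely, I would first establish $(i)_\nu$: the $O(6)$-type Watson equation follows from the bound-state fusion identity (\ref{Sb}) applied to the $SU(4)$ Watson equation~(i), exactly as in the proof that bound-state form factors satisfy the form factor equations (cited from Appendix~E of~\cite{BK}); the prefactor $\prod 1/(\tilde\phi_\nu(-o_{ij})\tilde\phi_\nu(-o_{ij}+1))$ in (\ref{Knu}) is symmetric enough under $i\leftrightarrow j$ that conjugating the $SU(4)$ S-matrix product through the intertwiners $\Gamma$ reproduces $\tilde S^{O(6)}(o_{ij})$.

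For $(ii)_\nu$, I would track how the crossing/shift relation (ii)$_\nu$ for $SU(4)$, namely the shift $u_1\to u_1+2/\nu$, descends to the bound-state level. The key point is that the modified shift in (ii)$_\nu$ is already the $O(N)$-shift $2/\nu$ rather than the genuine $SU(4)$ value; since the bound-state rapidity is $o_i=\tfrac12(u_{2i-1}+u_{2i})$, shifting \emph{both} constituent rapidities $u_1,u_2$ by $2/\nu$ shifts $o_1$ by $2/\nu$, which is precisely the shift appearing in the claimed $(ii)_\nu$. The charge-conjugation matrices $\mathbf C^{\bar11}$ and $\mathbf C^{1\bar1}$ are inherited through the map (\ref{map}) relating the $SU(4)$ charge conjugation (\ref{C4}) to the $O(6)$ one (\ref{C}), and the statistics factor $\sigma_1^{\mathcal O}$ disappears into the fermionic signs already accounted for in the $2^{-n/4}$-type normalization. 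I would verify that the product of $\tilde\phi_\nu$-factors in the prefactor transforms correctly under the cyclic relabelling on the right-hand side, using the quasi-periodicity of $\tilde\phi_\nu$ under $\theta\to\theta+i\pi/\nu$.

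The step I expect to be the genuine work is $(iii)_\nu$, the residue/bound-state-pole equation, which is where Lemma~\ref{l1} required the nontrivial $\tilde\phi$-identity displayed just before its Corollary. Here I would start from (iii)$_\nu$ for the $SU(4)$ K-function (with only \emph{one} term on the right-hand side, as emphasized in remark~2 after equation~(\ref{iiik})), set the constituent rapidities to $u_{2i-1}=o_i+\tfrac12,\ u_{2i}=o_i-\tfrac12$ (the bound-state parametrization at fusion angle $\pi/2$, i.e.\ $u$-difference $1$), and impose the $O(6)$ pole condition $o_{12}=2$. Taking the iterated residues and comparing the $\tilde\phi_\nu$-prefactor of (\ref{Knu}) before and after removing particles $1,2$ should reproduce the collapsed product $\prod_{i=3}^{n'}\tilde\phi_\nu(o_{i1}+1)\tilde\phi_\nu(o_{i2})$ together with the emergent $\mathbf C_{12}$. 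The main obstacle is the bookkeeping: I must show that the $\nu$-deformed analogue of the ratio identity used in Lemma~\ref{l1}(2),
\[
\frac{\prod_{3\le i<j\le n'}\tilde\phi_\nu(-o_{ij})\tilde\phi_\nu(-o_{ij}+1)}{\prod_{1\le i<j\le n'}\tilde\phi_\nu(-o_{ij})\tilde\phi_\nu(-o_{ij}+1)}\prod_{i=5}^{n}\prod_{j=2}^{4}\tilde\phi_\nu(u_{ij})=\prod_{i=3}^{n'}\tilde\phi_\nu(o_{i1}+1)\tilde\phi_\nu(o_{i2}),
\]
holds at $o_{12}=2$ with the stated constituent parametrization. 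Because $\tilde\phi_\nu$ differs from $\tilde\phi$ only by the shifted first Gamma argument $1-\tfrac12\nu+\tfrac{1}{2\pi i}\theta$, and the four constituent contributions pair up across the bound-state fusion, I expect the $\nu$-dependence to cancel in exactly the same pattern as in the $\nu=1/2$ case, so that the identity reduces to the already-proven one after matching arguments; verifying this cancellation carefully is the crux. Finally, I would note that $(i)_\nu$--$(iii)_\nu$ together are precisely the higher-level equations (i)$^{(k)}$--(iii)$^{(k)}$ at $\nu_k=\tfrac12$, so the lemma certifies that $K^{O(6),\nu}$ is a valid modified $O(6)$ K-function to seed the $O(N)$ nesting.
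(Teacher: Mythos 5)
Your proposal is correct and takes essentially the same route as the paper: properties $(i)_{\nu}$ and $(ii)_{\nu}$ are inherited directly from the modified $SU(4)$ equations under bound-state fusion, and $(iii)_{\nu}$ is reduced to the one-term $SU(4)$ equation (\ref{iiinu}) combined with the $\tilde{\phi}_{\nu}$ ratio identity, which is precisely the bound-state argument (Proposition 7 and Appendix E of \cite{BK}) that the paper invokes, in the direct form already spelled out in the proof of Lemma \ref{l1}. The paper's own proof is simply terser, deferring exactly these verifications to \cite{BK} with the remark that (\ref{iiinu}) has only one term on its right-hand side.
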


\begin{proof}
We follow here the proof of Proposition 7 in \cite{BK}. For $(i)_{\nu}$ and
$(ii)_{\nu}$ the proofs are again obvious. To prove $(iii)_{\nu}$ one follows
Appendix E of \cite{BK} taking into account that also in (\ref{iiinu}) there
is only one term on the right hand side.
\end{proof}

\begin{corollary}
The K-function defined by (\ref{Knu}) satisfies the higher level equations
$(i)^{(k)}$ - $(iv)^{(k)}$ or (4.13) - (4.16) of \cite{BFK8} for $k=N/2-3$,
i.e. $\nu_{k}=1/2$. Therefore it serves as a starting of the nesting for the
construction of an $O(N)$-Gross-Neveu form factor for arbitrary even $N>6$.
\end{corollary}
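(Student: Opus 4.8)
The plan is to show that the modified $O(6)$ K-function defined in (\ref{Knu}) satisfies the four higher-level equations $(i)^{(k)}$--$(iv)^{(k)}$ of \cite{BFK8} for the specific level $k=N/2-3$, where $\nu_k=1/2$. The crucial observation is that Lemma \ref{l2} has already established three of the required properties in exactly the form needed: $(i)_\nu$ is the Watson-type symmetry under the $O(6)$ S-matrix, $(ii)_\nu$ is the crossing/periodicity relation with the $O(N)$ shift $2/\nu$ rather than the naive $O(6)$ shift, and $(iii)_\nu$ is the residue (pole) equation with only a single term on the right-hand side. These are precisely the three distinguishing features of the higher-level equations noted after (\ref{iiik}): the shift carries the $O(N)$ value and the residue relation lacks the second term present in the ordinary form factor equation (iii). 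So the main work is a careful identification, matching the abstract higher-level equations (4.13)--(4.16) of \cite{BFK8} term by term against the statements $(i)_\nu$, $(ii)_\nu$, $(iii)_\nu$ of Lemma \ref{l2}.

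First I would fix the dictionary: set $k=N/2-3$ so that $\nu_k=2/(N-2k-2)=2/4=1/2$, and confirm that at this level the underlying S-matrix $S^{(k)}(u)$ in (\ref{Su}) reduces exactly to the $O(6)$ S-matrix, since $\tilde d_k$ with $\nu_k=1/2$ matches $\tilde d^{O(6)}$, while $\tilde b$ and $\tilde c$ agree with $O(6)$ by the Remark after (\ref{StO6}). This guarantees that the Bethe state entering $K^{O(6),\nu}$ is genuinely the $O(6)$ one, so that $(i)_\nu$ reproduces $(i)^{(k)}$ verbatim. Next I would match $(ii)_\nu$ with $(ii)^{(k)}$ of (\ref{iik}): both carry the shift $2/\nu$ (the global $O(N)$ shift), and the charge conjugation structure $\mathbf{C}^{\bar 11}$, $\mathbf{C}^{1\bar 1}$ is identical. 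Finally I would match $(iii)_\nu$ with $(iii)^{(k)}$ of (\ref{iiik}): the pole location $o_{12}=2$ corresponds to $u_{12}=1/\nu_k=2$, the prefactor $\prod_{i=3}^{n'}\tilde\phi_\nu(o_{i1}+1)\tilde\phi_\nu(o_{i2})$ matches, and — the key point — both sides have the single-term right-hand side characteristic of the modified equations.

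The one equation not directly supplied by Lemma \ref{l2} is the bound-state equation $(iv)^{(k)}$. Here I would argue that this is automatically inherited from the bound-state construction itself: the modified $O(6)$ K-function is, by its very definition (\ref{Knu}), obtained by taking bound-state residues of the modified $SU(4)$ K-function, and the bound-state fusion of the $SU(4)$ particles into $O(6)$ vector states via the intertwiner $\Gamma^{\underline\alpha}_{\underline A}$ and the map (\ref{map}) is compatible with further fusion. Concretely, one checks that the residue of $K^{O(6),\nu}$ at the $O(6)$ bound-state pole reproduces the fused K-function with the appropriate intertwiner, which follows from the associativity of the iterated residue operations already exploited in Lemma \ref{l1} and the general bound-state analysis of Appendix E of \cite{BK}.

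I expect the main obstacle to be the bookkeeping in the residue/pole equation $(iii)^{(k)}$ versus $(iv)^{(k)}$: one must verify that the shift normalizations and the $\tilde\phi_\nu$ prefactors produced by the bound-state residue formula (\ref{Knu}) land exactly on the coefficients demanded by (4.13)--(4.16) of \cite{BFK8}, with no stray constants depending on $\nu$ or on $n'$. The safest route is to invoke the general framework of \cite{BK}, following the proof of Proposition 7 there as already signposted in the proof of Lemma \ref{l2}, and to check that the single-term structure inherited from (\ref{iiinu}) is precisely what distinguishes the higher-level equations from the ordinary ones. Once this matching is complete, the conclusion that $K^{O(6),\nu}$ can serve as the starting point of the nested off-shell Bethe ansatz for $O(N)$ with arbitrary even $N>6$ follows immediately, since the higher-level equations $(i)^{(k)}$--$(iv)^{(k)}$ are exactly the input required to launch the iteration in \cite{BFK8}.
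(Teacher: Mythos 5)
Your proposal is correct and takes essentially the same route as the paper: there the corollary carries no separate proof, being an immediate consequence of Lemma \ref{l2}, whose modified equations $(i)_{\nu}$, $(ii)_{\nu}$, $(iii)_{\nu}$ are by construction exactly the higher-level equations at $k=N/2-3$, $\nu_{k}=1/2$ (same $O(6)$ S-matrix and Bethe state, same $O(N)$ shift $2/\nu$, same single-term residue relation at $o_{12}=1/\nu_{k}=2$), which is precisely the identification you perform. Your extra paragraph on $(iv)^{(k)}$ goes beyond what the paper makes explicit --- it lists and proves only the three equations $(i)^{(k)}$--$(iii)^{(k)}$ --- but your appeal to the bound-state residue construction (\ref{Knu}) and to Appendix E of \cite{BK} is consistent with how the paper handles all bound-state properties, so it fills a presentational gap rather than diverging from the paper's argument.
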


To construct the form factors of the $O(N)$ Gross-Neveu model for the
operators $\bar{\psi}\psi,\,J_{\mu}^{\alpha\beta}$ and $\psi^{\alpha}$ with
weight vectors $w=\left(  0,0,\dots,0\right)  ,\,\left(  1,0,\dots,0\right)  $
and $\left(  1,1,0,\dots,0\right)  $, respectively, we need for the starting
of the nested Bethe ansatz the modified $O(6)$ one for the iso-scalar with
weight vectors $w=\left(  0,0,0\right)  $. Therefore we generalize the
constructions of Subsection \ref{sscalar} from $\nu=1/2$ to general $\nu$ and prove

\begin{lemma}
\label{L3}The bound state formula (\ref{Knu}) applied to the modified $SU(4)$
K-function (\ref{K4nu}) with the p-function (\ref{ps})
\begin{equation}
p\left(  \underline{u},\underline{v},\underline{w},\underline{x}\right)
=e^{i\pi\nu\left(  \frac{1}{2}\sum_{i=1}^{4L}u_{i}-\sum_{i=1}^{3L}v_{i}%
+\sum_{i=1}^{L}x_{i}\right)  }-1 \label{p4s}%
\end{equation}
for $L=1,2,\dots$ (see (\ref{w})) yields the modified $O(6)$ K-function for
the iso-scalar for $n^{\prime}=2L$ particles. This means that for $L=1$ the
the bound state formula (\ref{Knu}) yields the modified $O(6)$ two-particle
K-function
\begin{equation}
K_{\alpha_{1}\alpha_{2}}(o_{1},o_{2})=\mathbf{C}_{\alpha_{1}\alpha_{2}}%
\frac{\Gamma\left(  1-\frac{1}{2}\nu-\frac{1}{2}\nu o_{12}\right)
\Gamma\left(  -\frac{1}{2}\nu+\frac{1}{2}\nu o_{12}\right)  }{\Gamma\left(
1+\nu-\frac{1}{2}\nu o_{12}\right)  \Gamma\left(  \nu+\frac{1}{2}\nu
o_{12}\right)  }\,. \label{k0}%
\end{equation}
This is the higher level K-function needed as the starting for the nested
$O(N)$ Bethe ansatz (see \cite{BFK8}).
\end{lemma}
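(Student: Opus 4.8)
The plan is to reproduce, with $\nu$ kept arbitrary, the explicit computation that gave the $\nu=\tfrac12$ iso-scalar form factor in Subsection \ref{sscalar} (carried out in detail in Appendix \ref{a2}). For the iso-scalar the p-function (\ref{p4s}) carries $4L$ outer and $3L,2L,L$ inner spectral parameters (the variables $u,v,w,x$), so that the modified $SU(4)$ K-function (\ref{K4nu}) has the nesting structure $(4L,3L,2L,L)$, specializing to $(4,3,2,1)$ for $L=1$; this is exactly the weight content forced by (\ref{w}) for $w^{\mathcal{O}}=(0,0,0,0)$. The bound-state prescription (\ref{Knu}) then tells me to take $\operatorname*{Res}_{u_{12}=1}\operatorname*{Res}_{u_{34}=1}$ and to multiply by the prefactor $\prod_{i<j}1/(\tilde\phi_\nu(-o_{ij})\tilde\phi_\nu(-o_{ij}+1))$.

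First I would compute the double residue as in (\ref{Kj}): the poles at $u_{12}=u_{34}=1$ are produced by pinchings of the level-one contours against the poles of $\tilde\phi_\nu(u_i-v_j)$, which degenerate two of the S-matrix factors into permutation operators and, through the fusion relation (\ref{inter}), (\ref{inter1}), generate the two bound-state intertwiners $\Gamma_{A_1A_2}^{(RS)}\Gamma_{A_3A_4}^{(R'S')}$. The sole change relative to the $\nu=\tfrac12$ case is the replacement $\tilde\phi\to\tilde\phi_\nu$, which shifts the arguments of the surviving Gamma functions by the $\nu$-linear amount $1-\tfrac12\nu$ built into the definition of $\tilde\phi_\nu$.

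Next I would evaluate the remaining level-one, level-two and level-three integrals exactly as in the current computation: each contour integral collapses to a single ratio of Gamma functions by the Gauss formula (\ref{Gauss}) for ${}_{2}F_{1}(a,b;c;1)$, with the parameters $a,b,c$ now carrying $\nu$-dependent shifts. Assembling these ratios with the prefactor of (\ref{Knu}) and with (\ref{p4s}) evaluated at the pinched rapidities $u_{2i-1}=o_i+\tfrac12,\,u_{2i}=o_i-\tfrac12$, the tensor structure collapses onto the charge-conjugation tensor $\mathbf{C}_{\alpha_1\alpha_2}$, confirming that the result is the $O(6)$ singlet, while the scalar coefficient reduces to the Gamma ratio of (\ref{k0}). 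Specializing $\nu=\tfrac12$ should reproduce the ratio obtained in Subsection \ref{sscalar}, which is the natural consistency check.

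For general $L$ the same mechanism iterates: the $2L$ pinchings generate the $2L$ bound states and the nested integrations again close by (\ref{Gauss}), so the output is the iso-scalar modified $O(6)$ K-function, which satisfies $(i)_\nu$--$(iii)_\nu$ by Lemma \ref{l2}. The hard part will be the contour bookkeeping: one must verify that for general $\nu<\tfrac12$ the contour $\mathcal{C}_{\underline{u},\nu}$ still isolates precisely the pinching poles that build the bound states, with no spurious residues, and that the various $\nu$-shifted Gamma arguments recombine into (\ref{k0}) without leftover factors. Everything else is a direct $\nu$-deformation of the established $\nu=\tfrac12$ computation.
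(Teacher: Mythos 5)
Your overall strategy does coincide with the paper's: take the double residue $\operatorname*{Res}_{u_{12}=1}\operatorname*{Res}_{u_{34}=1}$ of (\ref{K4nu}) via pinching of the level-one contours, extract the intertwiner/tensor structure (which indeed collapses onto $\mathbf{C}_{\alpha_1\alpha_2}$), and then evaluate the remaining nested integrals. The genuine gap is your claim that ``each contour integral collapses to a single ratio of Gamma functions by the Gauss formula (\ref{Gauss})''. This works (in the guise of a Barnes-type lemma) for the level-three integral (\ref{x}) and for the residual $v_3$-integration (\ref{v3}), but it fails for the last object left over, the two-fold $w$-integral $Y(o)$: because of the coupling factors depending on $w_{12}$ (the $\tau_\nu(w_{12})$ factor and the function $\Psi(w_{12})$ assembled from the level-two and level-three output), the two integrations do not factorize into independent beta-type integrals, and no application of (\ref{Gauss}) closes them. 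The paper evaluates $Y(o)$ by a different argument entirely: it observes that $Y(o)$ satisfies the two-particle equations (\ref{wat}) with the eigenvalue $\tilde S_0^{O(6)}$ of (\ref{SO6}), writes down the corresponding minimal solution, and fixes the extra pole factor through the pinching $w_1\to v_1-2/\nu$, $w_2\to v_2$, which produces a double pole at $o=1$; this gives (\ref{Ynu}), cross-checked against (C.10) of \cite{BFK8} (eq.~(\ref{KKK})) and with Mathematica. Without this (or an equivalent) argument your computation stalls exactly at the hard point, which is not ``contour bookkeeping'' but the non-factorizing double integral.

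A second, related flaw is your treatment of the p-function. After the pinchings $v_1=u_2$, $v_2=u_4$ the p-function (\ref{p4s}) still depends on the internal integration variables ($v_3$ and $x$ for $L=1$), so it cannot be ``evaluated at the pinched rapidities'' and pulled out. This is not a technicality: with $p=1$ one finds $Y(o)=2K(o,1/\nu-2)/(\cos\pi\nu-\cos\pi\nu\,o)$, which differs from the claimed (\ref{k0}) by a non-constant factor. Only the specific combination $e^{(\cdots)}-1$ in (\ref{p4s}), threaded through the $v_3$- and $x$-integrations by means of the generalized Barnes formula with a period-one factor $f(z)$ (stated at the end of Appendix \ref{a2}), removes the spurious double-pole factor and yields $Y_p(o)=K(o,1/\nu-2)$, i.e.\ exactly (\ref{k0}). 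Incidentally, your premise inverts the paper's logic: Appendix \ref{a2} is written directly for general $\nu$, and the $\nu=\tfrac12$ result of Subsection \ref{sscalar} is quoted as its specialization, rather than the general case being a deformation of an explicit $\nu=\tfrac12$ computation.
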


The proof of this lemma can be found in Appendix \ref{a2}. It follows the main
result of this article:

\begin{corollary}
For all $O(N)$ Gross-Neveu form factors of operators $\mathcal{O}(x)$ with
weights $w^{\mathcal{O}}=(w_{1},w_{2},0,\dots,0,0)$ the start of the nesting
is obtained by (\ref{K4nu}) with the p-function (\ref{p4s}) and the bound
state formula (\ref{Knu}).
\end{corollary}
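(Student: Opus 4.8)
The plan is to deduce the general claim from the iso-scalar computation of Lemma \ref{L3} together with the structural Corollary that follows Lemma \ref{l2}. First I would recall the architecture of the $O(N)$ nested `off-shell' Bethe ansatz of \cite{BFK8}: the form factor K-function is assembled by iterating the higher-level K-functions $K^{(k)}$ for $k=0,1,\dots,N/2-3$, level $k$ carrying the $O(N-2k)$ S-matrix, with the innermost level $k=N/2-3$ of $O(6)$ type. The whole operator content is distributed over the scalar p-functions at the various levels together with the weight-flip numbers $n_k$ fixed by $w^{\mathcal{O}}$, and the iteration only closes once one supplies an $O(6)$ seed obeying the higher-level equations $(i)^{(k)}$--$(iv)^{(k)}$. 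The three operators $\bar\psi\psi,\,J_\mu^{\alpha\beta},\,\psi^{\alpha}$ are the motivating instances of the stated weight type.

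The central step is to show that, whenever $w^{\mathcal{O}}=(w_1,w_2,0,\dots,0)$, the innermost $O(6)$ level is reached with iso-scalar weight $(0,0,0)$. Here I would invoke the $O(N)$ analogue of the weight/nesting-number relation (\ref{w}): the successive level differences $n_{k-1}-n_k$ are controlled by the successive components of $w^{\mathcal{O}}$, shifted by the free integer $L$. Since every component of $w^{\mathcal{O}}$ from the third onward vanishes, the charges $w_1,w_2$ are exhausted at the outer levels of the iteration---where they merely reshape the outer p-functions and root counts---so that the configuration inherited by the $O(6)$ level coincides with that of the iso-scalar $\bar\psi\psi$ treated in Subsection \ref{sscalar}. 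In particular the innermost p-function is forced to be the iso-scalar one (\ref{p4s}).

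With this reduction the remainder is assembly. By Lemma \ref{L3}, the bound state formula (\ref{Knu}) applied to the modified $SU(4)$ K-function (\ref{K4nu}) with the iso-scalar p-function (\ref{p4s}) produces exactly the modified $O(6)$ K-function (\ref{k0}) at general $\nu=2/(N-2)$; by the Corollary after Lemma \ref{l2} this K-function satisfies the higher-level equations $(i)^{(k)}$--$(iv)^{(k)}$ of \cite{BFK8} at $k=N/2-3$, i.e.\ $\nu_k=1/2$. It is therefore an admissible $O(6)$ seed, and being manifestly independent of $(w_1,w_2)$ it serves every operator of the stated weight type simultaneously, which is the assertion of the Corollary.

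The step I expect to be the main obstacle is the central one: rigorously tracking the weight data through the successive nesting levels and confirming that the vanishing of $w_3,w_4,\dots$ forces the iso-scalar configuration precisely at the $O(6)$ level, and not at some intermediate $O(N-2k)$. This hinges on the interplay between the $O(N)$ shift $2/\nu$ in (ii)$_{\nu}$ and the single-term residue equation (iii)$_{\nu}$, which is exactly the feature distinguishing the modified K-functions from the ordinary $SU(4)$ ones.
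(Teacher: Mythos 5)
Your proposal is correct and follows essentially the same route as the paper: the paper likewise reduces the claim to the observation that weights $(w_{1},w_{2},0,\dots,0)$ leave only the iso-scalar configuration $w=(0,0,0)$ at the innermost $O(6)$ level, and then invokes Lemma \ref{L3} together with the Corollary after Lemma \ref{l2} to certify the seed. The weight-tracking step you flag as the main obstacle is exactly what the paper asserts (with reference to \cite{BFK8}) rather than re-derives, so your more explicit account of it is a faithful elaboration, not a different argument.
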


\subsection*{Conclusions:}

\addcontentsline{toc}{section}{Conclusions}

The form factors for the $SU(N)$ chiral Gross-Neveu model were constructed in
\cite{BFK0,BFK1,BFK2,BFK3,BFK4}. In \cite{BFK7} we used the isomorphism
$O(4)\simeq SU(2)\times SU(2)$ as the starting point of the nesting procedure
to construct the $O(N)~\sigma$-model form factors. Up to now we were not able
to do the analog for the $O(N)$ Gross-Neveu model. However, the fundamental
particles of the $O(6)$ Gross-Neveu model may by identified with the bound
states of the $SU(4)$ chiral Gross-Neveu model \cite{KT1}. Using this
identification we showed in the present article how to use the $O(6)$
functions as the starting point of the nesting procedure to construct the
$O(N)$ Gross-Neveu model form factors (for $N$ even). In a forthcoming article
we will consider the $O(4)$ Gross-Neveu model. Also the asymptotic behavior of
the form factors and the short distance behavior of the correlation functions
will be investigated.

\paragraph{Acknowledgment:}

The authors have profited from discussions with R. Schra\-der and B. Schroer.
H.~B. was supported by Armenian grant 15T-1C308 and by ICTP OEA-AC-100
project. A.~F. acknowledges financial support from CNPq (Conselho Nacional de
Desenvolvimento Cientifico e Tecnologico). M.~K. was supported by Fachbereich
Physik, Freie Universit\"{a}t Berlin grant 01000/20000000.

\appendix

\section*{Appendix}

\addcontentsline{toc}{part}{Appendix}

\renewcommand{\theequation}{\mbox{\Alph{section}.\arabic{equation}}} \setcounter{equation}{0}

For simplicity the equations in the following are mostly written up to
inessential constants.

\section{Proof of Lemma {\ref{L3}}}

\label{a2}\label{here}\label{1}\label{2}

\begin{proof}
We have $n=4,\,n_{1}=3,\,n_{2}=2$ and $n_{3}=1$. For convenience we use again
the variables $u,v,w,x,o$ with $\theta=i\pi\nu u,~z^{(1)}=i\pi\nu
v,~z^{(2)}=i\pi\nu w,~z^{(3)}=i\pi\nu x,~\omega=i\pi\nu o$. We prove that
(\ref{Knu}) for $n=4$ (with $o_{1}=(u_{1}+u_{2})/2,~o_{2}=(u_{3}+u_{4})/2$)%
\[
K_{\underline{\alpha}}^{O(6),\nu}(\underline{o})\Gamma_{\underline{A}%
}^{\underline{\alpha}}=\frac{1}{\tilde{\phi}_{\nu}(-o_{12})\tilde{\phi}_{\nu
}(-o_{12}+1)}\operatorname*{Res}_{u_{12}=1}\operatorname*{Res}_{u_{34}%
=1}K_{\underline{A}}^{SU(4),\nu}(\underline{u})
\]
with the p-function (\ref{p4s}) implies (\ref{k0})
\[
K_{\alpha_{1}\alpha_{2}\alpha}^{O(6),\nu}(o_{1},o_{2})=\mathbf{C}_{\alpha
_{1}\alpha_{2}}\frac{\Gamma\left(  1-\frac{1}{2}\nu-\frac{1}{2}\nu
o_{12}\right)  \Gamma\left(  -\frac{1}{2}\nu+\frac{1}{2}\nu o_{12}\right)
}{\Gamma\left(  1+\nu-\frac{1}{2}\nu o_{12}\right)  \Gamma\left(  \nu+\frac
{1}{2}\nu o_{12}\right)  }\,.
\]
We calculate the residues (first for $p=1$) of the component with
$\underline{A}=(1,2,3,4)$ (using pinching at $v_{1}=u_{2}\rightarrow
u_{1}-1,~v_{2}=u_{4}\rightarrow u_{3}-1$)%
\begin{align}
X  &  =\operatorname*{Res}_{u_{12}=1}\operatorname*{Res}_{u_{34}=1}%
K_{1234}^{SU(4)}(\underline{u})=\operatorname*{Res}_{u_{12}=1}%
\operatorname*{Res}_{u_{34}=1}\int_{\mathcal{C}_{\underline{u}}}%
dv_{3}\operatorname*{Res}_{u_{12}=1}\operatorname*{Res}_{u_{34}=1}%
\operatorname*{Res}_{v_{1}=u_{2}}\operatorname*{Res}_{v_{2}=u_{4}}h\left(
\underline{u},\underline{v}\right)  \Psi_{1234}(\underline{u},\underline
{v})\label{X}\\
&  =\int_{\mathcal{C}_{\underline{u}}}dv_{3}\operatorname*{Res}_{u_{12}%
=1}\operatorname*{Res}_{u_{34}=1}\operatorname*{Res}_{v_{1}=u_{2}%
}\operatorname*{Res}_{v_{2}=u_{4}}h\left(  \underline{u},\underline{v}\right)
K_{\underline{B}}^{(1)}(\underline{v})\Phi_{1234}^{\underline{B}}%
(\underline{u},\underline{v})\nonumber\\
&  =\left[  \tilde{b}\left(  u_{14}\right)  \int_{\mathcal{C}_{\underline{u}}%
}dv_{3}h_{r}\left(  \underline{u},\underline{v}\right)  \left(  K_{234}%
^{(1)}(\underline{v})-K_{243}^{(1)}(\underline{v})\right)  \tilde{b}\left(
u_{1}-v_{3}\right)  \tilde{c}\left(  u_{3}-v_{3}\right)  \right]
_{\substack{v_{1}=u_{2},v_{2}=u_{4}~~~~~~\\u_{1}=u_{2}+1,u_{3}=u_{4}%
+1}}\nonumber
\end{align}
with $h_{r}\left(  \underline{u},\underline{v}\right)  =\operatorname*{Res}%
\limits_{v_{1}=u_{2}}\operatorname*{Res}\limits_{v_{2}=u_{4}}h\left(
\underline{u},\underline{v}\right)  $. It was used that for $v_{1}=u_{2}$ and
$v_{2}=u_{4}$ (see Fig. \ref{f0})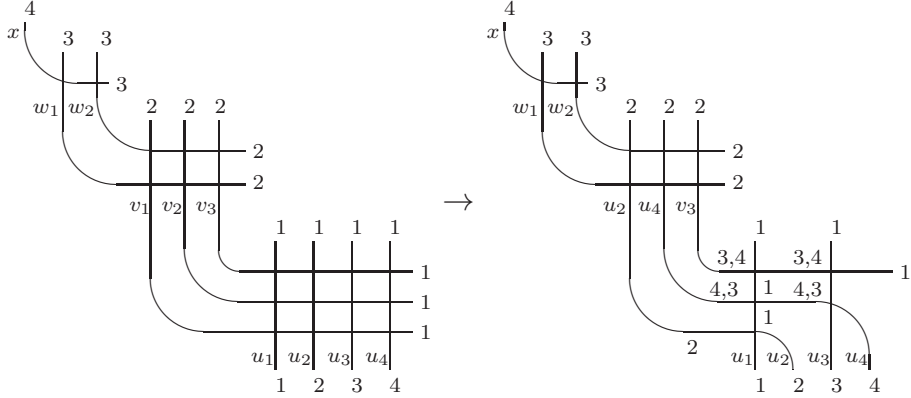
\begin{figure}[th]
\begin{center}
$%
\begin{array}
[c]{c}%
\unitlength1mm\begin{picture}(53,50)(38,2)\scriptsize \put(87,4){\line(0,1){17}} \put(82,4){\line(0,1){17}} \put(77,4){\line(0,1){17}} \put(72,4){\line(0,1){17}} \put(90,9){\line(-1,0){22}} \put(90,13){\line(-1,0){22}} \put(90,17){\line(-1,0){22}} \put(68,37){\oval(7,40)[lb]} \put(68,37){\oval(16,48)[lb]} \put(68,37){\oval(25,56)[lb]} \put(68,46){\oval(39,26)[lb]} \put(68,46){\oval(48,35)[lb]} \put(50,50){\oval(22,16)[lb]} \put(68.8,5){$u_1$} \put(73.6,5){$u_2$} \put(78.8,5){$u_3$} \put(83.8,5){$u_4$} \put(52.8,25){$v_1$} \put(57.,25){$v_2$} \put(61.5,25){$v_3$} \put(40.1,38){$w_1$} \put(44.7,38){$w_2$} \put(36.7,48){$x$} \put(91,8){1} \put(91,12){1} \put(91,16){1} \put(87,22){1} \put(82,22){1} \put(77,22){1} \put(72,22){1} \put(87,1){4} \put(82,1){3} \put(77,1){2} \put(72,1){1} \put(69,28){2} \put(69,32){2} \put(55,38){2} \put(60,38){2} \put(64,38){2} \put(51,41){3} \put(49,47){3} \put(44,47){3} \put(39,51){4} \end{picture}
\end{array}
\rightarrow%
\begin{array}
[c]{c}%
\unitlength1mm\begin{picture}(53,50)(38,2)\scriptsize \put(82,4){\line(0,1){17}} \put(72,4){\line(0,1){17}} \put(68,4){\oval(18,10)[rt]} \put(68,4){\oval(38,18)[rt]} \put(90,17){\line(-1,0){22}} \put(68,37){\oval(7,40)[lb]} \put(68,37){\oval(16,48)[lb]} \put(68,37){\oval(25,56)[lb]} \put(68,46){\oval(39,26)[lb]} \put(68,46){\oval(48,35)[lb]} \put(50,50){\oval(22,16)[lb]} \put(68.8,5){$u_1$} \put(73.5,5){$u_2$} \put(78.8,5){$u_3$} \put(83.8,5){$u_4$} \put(52.,25){$u_2$} \put(56.6,25){$u_4$} \put(61.5,25){$v_3$} \put(40.1,38){$w_1$} \put(44.7,38){$w_2$} \put(36.7,48){$x$} \put(91,16){1} \put(82,22){1} \put(72,22){1} \put(73,14){1} \put(73,10){1} \put(63,6){2} \put(67,18){3,4} \put(66,13.7){4,3} \put(77,18){3,4} \put(77,13.7){4,3} \put(87,1){4} \put(82,1){3} \put(77,1){2} \put(72,1){1} \put(69,28){2} \put(69,32){2} \put(55,38){2} \put(60,38){2} \put(64,38){2} \put(51,41){3} \put(49,47){3} \put(44,47){3} \put(39,51){4} \end{picture}
\end{array}
$
\end{center}
\caption{The Bethe state $\Psi_{\underline{A}}(\underline{u},\underline{v})$
in (\ref{K}) for an iso-scalar operator where $\underline{A}=(1,2,3,4)$ and
$v_{1}\rightarrow u_{2},v_{2}\rightarrow u_{4}$.}%
\label{f0}%
\end{figure}$%
{\phantom{\rule{4.3502in}{1.9871in}}}%
$%
\[
\operatorname*{Res}_{u_{12}=1}\operatorname*{Res}_{u_{34}=1}\Phi
_{1234}^{\underline{B}}(\underline{u},u_{2},u_{4})=\delta_{2}^{B_{1}}\left(
\delta_{3}^{B_{2}}\delta_{4}^{B_{3}}-\delta_{4}^{B_{2}}\delta_{3}^{B_{3}%
}\right)  \tilde{b}\left(  u_{1}-u_{4}\right)  \tilde{b}\left(  u_{1}%
-v_{3}\right)  \tilde{c}\left(  u_{3}-v_{3}\right)
\]
and therefore%
\begin{align*}
&  K_{\underline{B}}^{(1)}(u_{2},u_{4},v_{3})\operatorname*{Res}_{u_{12}%
=1}\operatorname*{Res}_{u_{34}=1}\Phi_{1234}^{\underline{B}}(\underline
{u},u_{2},u_{4},v_{3})\\
&  =\left(  K_{234}^{(1)}(u_{2},u_{4},v_{3})-K_{243}^{(1)}(u_{2},u_{4}%
,v_{3})\right)  \tilde{b}\left(  u_{1}-u_{4}\right)  \tilde{b}\left(
u_{1}-v_{3}\right)  \tilde{c}\left(  u_{3}-v_{3}\right)  \,,
\end{align*}
further with $o=o_{12}=u_{24}=v_{12}$%
\begin{multline*}
X(o)=\operatorname*{Res}_{u_{12}=1}\operatorname*{Res}_{u_{34}=1}%
K_{1234}^{SU(4)}(\underline{u})\\
=\int_{\mathcal{C}_{\underline{u}}}dv_{3}\prod_{i=1}^{4}\prod_{\underset
{i,j\neq2,1;4,2}{j=1}}^{2}\tilde{\phi}_{\nu}(u_{i}-v_{j})\tau(v_{12}%
)\prod_{i=1}^{4}\tilde{\phi}_{\nu}(u_{i}-v_{3})\tau(v_{13})\tau(v_{23})\\
\times\left(  K_{234}^{(1)}(u_{2},u_{4},v_{3})-K_{243}^{(1)}(u_{2},u_{4}%
,v_{3})\right)  \tilde{b}\left(  u_{1}-u_{4}\right)  \tilde{b}\left(
u_{1}-v_{3}\right)  \tilde{c}\left(  u_{3}-v_{3}\right)
\end{multline*}
with $v_{1}\rightarrow u_{2},~v_{2}\rightarrow u_{4},~u_{1}\rightarrow
u_{2}+1,~u_{3}\rightarrow u_{4}+1$. We get $X$ as%
\begin{align*}
X(o)  &  =\frac{\Gamma\left(  1-\tfrac{1}{2}\nu\left(  1+o\right)  \right)
\Gamma\left(  \tfrac{1}{2}\nu\left(  o-1\right)  \right)  }{\sin\frac{1}{2}%
\pi\nu o}Y(o)\\
Y(o)  &  =\int_{\mathcal{C}_{\underline{u}}}dv_{3}\tilde{c}\left(
-u_{4}+v_{3}\right)  \left(  K_{234}^{(1)}(u_{2},u_{4},v_{3})-K_{243}%
^{(1)}(u_{2},u_{4},v_{3})\right)
\end{align*}
where it was used that for $v_{1}=u_{2},~v_{2}=u_{4},~u_{1}=u_{2}%
+1,~u_{3}=u_{4}+1$%
\begin{align*}
&  \tilde{b}\left(  u_{1}-u_{4}\right)  \frac{\tilde{\phi}_{\nu}(u_{1}%
-v_{1})\tilde{\phi}_{\nu}(u_{1}-v_{2})\tilde{\phi}_{\nu}(u_{2}-v_{2}%
)\tilde{\phi}_{\nu}(u_{3}-v_{1})\tilde{\phi}_{\nu}(u_{3}-v_{2})\tilde{\phi
}_{\nu}(u_{4}-v_{1})}{\tilde{\phi}_{\nu}(v_{1}-v_{2})\tilde{\phi}_{\nu}%
(-v_{1}+v_{2})}\\
&  =\frac{1}{\sin\tfrac{1}{2}\nu\pi\left(  u_{4}-u_{2}\right)  }\Gamma\left(
1+\tfrac{1}{2}\nu\left(  -u_{2}-1+u_{4}\right)  \right)  \Gamma\left(
-\tfrac{1}{2}\nu\left(  u_{4}+1-u_{2}\right)  \right)
\end{align*}
and
\[
\frac{\tilde{\phi}_{\nu}(u_{1}-v_{3})\tilde{\phi}_{\nu}(u_{2}-v_{3}%
)\tilde{\phi}_{\nu}(u_{3}-v_{3})\tilde{\phi}_{\nu}(u_{4}-v_{3})}{\tilde{\phi
}_{\nu}(v_{1}-v_{3})\tilde{\phi}_{\nu}(-v_{1}+v_{3})\tilde{\phi}_{\nu}%
(v_{2}-v_{3})\tilde{\phi}_{\nu}(-v_{2}+v_{3})}\frac{\tilde{b}\left(
u_{1}-v_{3}\right)  \tilde{c}\left(  u_{3}-v_{3}\right)  }{\tilde{c}\left(
-u_{4}+v_{3}\right)  }=-1\,.
\]
$\allowbreak$Therefore we have%
\[
K_{\underline{\alpha}}^{O(6),\nu}(\underline{o})=\mathbf{C}_{\alpha_{1}%
\alpha_{2}}\frac{\Gamma\left(  1-\tfrac{1}{2}\nu\left(  1+o\right)  \right)
\Gamma\left(  \tfrac{1}{2}\nu\left(  o-1\right)  \right)  }{\tilde{\phi}_{\nu
}(-o)\tilde{\phi}_{\nu}(-o+1)\sin\frac{1}{2}\pi\nu o}Y(o)=\mathbf{C}%
_{\alpha_{1}\alpha_{2}}Y(o)\,.
\]
Next we calculate%
\[
K_{\underline{B}}^{(1)}(\underline{v})=\int_{\mathcal{C}_{\underline{v}}%
}d\underline{w}h\left(  \underline{v},\underline{w}\right)  K_{\underline{C}%
}^{(2)}(\underline{w})\Phi^{(1)}\,_{\underline{B}}^{\underline{C}}%
(\underline{v},\underline{w})
\]
with (see Fig. \ref{f0})%
\[
\Phi^{(1)}\,_{2B_{2}B_{3}}^{C_{1}C_{2}}(\underline{v},\underline{w}%
)=\delta_{B_{2}}^{C_{1}}\delta_{B_{3}}^{C_{2}}\Phi_{1}+\delta_{B_{2}}^{C_{2}%
}\delta_{B_{3}}^{C_{1}}\Phi_{2}%
\]%
\begin{align*}
\Phi_{2}  &  =\tilde{b}\left(  v_{1}-w_{1}\right)  \tilde{b}\left(
v_{1}-w_{2}\right)  \tilde{b}\left(  v_{2}-w_{1}\right)  \tilde{c}\left(
v_{2}-w_{2}\right)  \tilde{c}\left(  v_{3}-w_{1}\right) \\
\Phi_{1}  &  =\tilde{b}\left(  v_{1}-w_{1}\right)  \tilde{b}\left(
v_{1}-w_{2}\right)  \tilde{c}\left(  v_{2}-w_{1}\right) \\
&  \times\left(  \tilde{b}\left(  v_{2}-w_{2}\right)  \tilde{b}\left(
v_{3}-w_{1}\right)  \tilde{c}\left(  v_{3}-w_{2}\right)  +\tilde{c}\left(
v_{2}-w_{2}\right)  \tilde{c}\left(  v_{3}-w_{1}\right)  \right)
\end{align*}
and%
\begin{align}
K_{\underline{C}}^{(2)}(\underline{w})  &  =\int_{\mathcal{C}_{\underline{w}}%
}dx\tilde{\phi}_{\nu}(w_{1}-x)\tilde{\phi}_{\nu}(w_{2}-x)\left(
\delta_{\underline{C}}^{34}\tilde{b}(w_{1}-x)\tilde{c}(w_{2}-x)+\delta
_{\underline{C}}^{43}\tilde{c}(w_{1}-x)\right) \label{x}\\
&  =\left(  \delta_{\underline{C}}^{34}-\delta_{\underline{C}}^{43}\right)
\Gamma\left(  -\tfrac{1}{2}\nu+\tfrac{1}{2}\nu w_{12}\right)  \Gamma\left(
1-\tfrac{1}{2}\nu-\tfrac{1}{2}\nu w_{12}\right) \nonumber
\end{align}
which follows from%
\begin{multline*}
\frac{1}{2\pi i}\left(  \int_{\mathcal{C}_{a}}+\int_{\mathcal{C}_{b}}\right)
dz\Gamma(a-z)\Gamma(b-z)\Gamma\left(  c+z\right)  \Gamma\left(  d+z\right) \\
=-\frac{\Gamma\left(  c+a\right)  \Gamma\left(  d+a\right)  \Gamma\left(
c+b\right)  \Gamma\left(  d+b\right)  }{\Gamma\left(  c+d+a+b\right)  }\,.
\end{multline*}
Therefore%
\begin{multline*}
K_{2B_{2}B_{3}}^{(1)}(\underline{v})=\int_{\mathcal{C}_{\underline{v}}%
}d\underline{w}h\left(  \underline{v},\underline{w}\right)  \Gamma\left(
-\tfrac{1}{2}\nu+\tfrac{1}{2}\nu w_{12}\right)  \Gamma\left(  1-\tfrac{1}%
{2}\nu-\tfrac{1}{2}\nu w_{12}\right) \\
\times\left(  \delta_{B_{2}}^{3}\delta_{B_{3}}^{4}-\delta_{B_{2}}^{4}%
\delta_{B_{3}}^{3}\right)  \left(  \Phi_{1}-\Phi_{2}\right)
\end{multline*}
because $\left(  \delta_{\underline{C}}^{34}-\delta_{\underline{C}}%
^{43}\right)  \left(  \delta_{B_{2}}^{C_{1}}\delta_{B_{3}}^{C_{2}}\Phi
_{1}+\delta_{B_{2}}^{C_{2}}\delta_{B_{3}}^{C_{1}}\Phi_{2}\right)  =\left(
\delta_{B_{2}}^{3}\delta_{B_{3}}^{4}-\delta_{B_{2}}^{4}\delta_{B_{3}}%
^{3}\right)  \left(  \Phi_{1}-\Phi_{2}\right)  $ and%
\[
\Phi_{1}-\Phi_{2}=\frac{\tilde{b}\left(  v_{1}-w_{1}\right)  \tilde{b}\left(
v_{1}-w_{2}\right)  \tilde{c}\left(  v_{2}-w_{1}\right)  \tilde{c}\left(
v_{2}-w_{2}\right)  }{\tilde{c}\left(  w_{1}-w_{2}\right)  }\frac{\tilde
{c}\left(  v_{3}-w_{1}\right)  \tilde{c}\left(  v_{3}-w_{2}\right)  }%
{\tilde{c}\left(  v_{3}-v_{2}\right)  }\,.
\]
Finally exchanging the integrations%
\begin{align*}
Y(o)  &  =\int_{\mathcal{C}_{\underline{u}}}dv_{3}\tilde{c}\left(
-u_{4}+v_{3}\right)  \left(  K_{234}^{(1)}(u_{2},u_{4},v_{3})-K_{243}%
^{(1)}(u_{2},u_{4},v_{3})\right) \\
&  =\int_{\mathcal{C}_{\underline{v}}}d\underline{w}\tau_{\nu}(w_{12}%
)\Gamma\left(  -\tfrac{1}{2}\nu+\tfrac{1}{2}\nu w\right)  \Gamma\left(
-\tfrac{1}{2}\nu w\right) \\
&  \times\left[  \prod_{i=1}^{2}\prod_{j=1}^{2}\tilde{\phi}_{\nu}(v_{i}%
-w_{j})\frac{\tilde{b}\left(  v_{1}-w_{1}\right)  \tilde{b}\left(  v_{1}%
-w_{2}\right)  \tilde{c}\left(  v_{2}-w_{1}\right)  \tilde{c}\left(
v_{2}-w_{2}\right)  }{\tilde{c}\left(  w_{1}-w_{2}\right)  }\right]
_{v_{1}=u_{2},v_{2}=u_{4}}\\
&  \times\int_{\mathcal{C}_{\underline{w}}}dv_{3}\tilde{c}\left(  -u_{4}%
+v_{3}\right)  \tilde{\phi}_{\nu}(v_{3}-w_{1})\tilde{\phi}_{\nu}(v_{3}%
-w_{2})\frac{\tilde{c}\left(  v_{3}-w_{1}\right)  \tilde{c}\left(  v_{3}%
-w_{2}\right)  }{\tilde{c}\left(  v_{3}-u_{4}\right)  }%
\end{align*}
the $v_{3}$-integration can be done as above in (\ref{x})%
\begin{equation}
\int_{\mathcal{C}_{\underline{w}}}dv_{3}\tilde{\phi}_{\nu}(v_{3}-w_{1}%
)\tilde{\phi}_{\nu}(v_{3}-w_{2})\tilde{c}\left(  v_{3}-w_{1}\right)  \tilde
{c}\left(  v_{3}-w_{2}\right)  =\Gamma(-\tfrac{1}{2}\nu+\tfrac{1}{2}\nu
w_{12})\Gamma(-\tfrac{1}{2}\nu-\tfrac{1}{2}\nu w_{12}) \label{v3}%
\end{equation}
and therefore (for $v_{1}=u_{2},v_{2}=u_{4},~o=u_{2}-u_{4}$)%
\begin{multline*}
Y(o)=\int_{\mathcal{C}_{\underline{v}}}d\underline{w}\left(  \prod_{i=1}%
^{2}\prod_{j=1}^{2}\tilde{\phi}_{\nu}(v_{i}-w_{j})\right) \\
\times\tilde{b}\left(  v_{1}-w_{1}\right)  \tilde{b}\left(  v_{1}%
-w_{2}\right)  \tilde{c}\left(  v_{2}-w_{1}\right)  \tilde{c}\left(
v_{2}-w_{2}\right)  \Psi\left(  w_{1}-w_{2}\right)
\end{multline*}
with%
\begin{align*}
\Psi\left(  w\right)   &  =\frac{\Gamma\left(  -\tfrac{1}{2}\nu+\tfrac{1}%
{2}\nu w\right)  \Gamma\left(  1-\tfrac{1}{2}\nu-\tfrac{1}{2}\nu w\right)
\Gamma\left(  -\tfrac{1}{2}\nu+\tfrac{1}{2}\nu w\right)  \Gamma\left(
-\tfrac{1}{2}\nu-\tfrac{1}{2}\nu w\right)  }{\tilde{c}\left(  w\right)
\tilde{\phi}_{\nu}(w)\tilde{\phi}_{\nu}(-w)}\\
&  =\frac{1}{\pi}w\left(  \sin\tfrac{1}{2}\pi\nu w\right)  \Gamma\left(
-\tfrac{1}{2}\nu+\tfrac{1}{2}\nu w\right)  \Gamma\left(  -\tfrac{1}{2}%
\nu-\tfrac{1}{2}\nu w\right)  \,.
\end{align*}
$\allowbreak$ In (C.10) of \cite{BFK8} was shown that%
\begin{equation}
\int_{\mathcal{C}_{\underline{v}}}d\underline{w}\prod_{j=1}^{2}\left(
\tilde{\phi}_{\nu}(v_{1}-w_{j})\tilde{\phi}_{\nu}(v_{2}-w_{j})\tilde{c}\left(
v_{2}-w_{j}\right)  \right)  \varphi\left(  w_{12},k\right)  =K(v_{12},k)
\label{KKK}%
\end{equation}
with%
\begin{align*}
\varphi\left(  w,k\right)   &  =\frac{\left(  1-w\right)  K(w,k+1)}%
{\tilde{\phi}_{\nu}(w)\tilde{\phi}_{\nu}(-w)\left(  w+1/\nu-k-1\right)  }\\
K(u,k)  &  =\frac{\Gamma\left(  1-\frac{1}{2}\nu-\frac{1}{2}\nu u\right)
\Gamma\left(  -\frac{1}{2}\nu+\frac{1}{2}\nu u\right)  }{\Gamma\left(
\frac{3}{2}-\frac{1}{2}\nu k-\frac{1}{2}\nu u\right)  \Gamma\left(  \frac
{1}{2}-\frac{1}{2}\nu k+\frac{1}{2}\nu u\right)  }.
\end{align*}
Note that for $k=1/\nu-2$%
\[
\Psi\left(  w\right)  =\frac{1}{\sin\frac{1}{2}\pi\nu\left(  w-1\right)
\sin\frac{1}{2}\pi\nu\left(  w+1\right)  }\varphi\left(  w,1/\nu-2\right)  .
\]
Similarly to (\ref{KKK}) we have here\footnote{This result was in addition
checked with Mathematica.}%
\begin{align}
Y(o)  &  =\int_{\mathcal{C}_{\underline{v}}}d\underline{w}\prod_{j=1}%
^{2}\left(  \tilde{\phi}_{\nu}(v_{1}-w_{j})\tilde{b}\left(  v_{1}%
-w_{j}\right)  \tilde{\phi}_{\nu}(v_{2}-w_{j})\tilde{c}\left(  v_{2}%
-w_{j}\right)  \right)  \Psi\left(  w_{12}\right) \nonumber\\
&  =\frac{K(o,k=1/\nu-2)}{\sin\frac{1}{2}\pi\nu\left(  o-1\right)  \sin
\frac{1}{2}\pi\nu\left(  o+1\right)  }=2\frac{K(o,k=1/\nu-2)}{\cos\pi\nu
-\cos\pi\nu o} \label{Ynu}%
\end{align}
with $o=o_{12}=u_{24}=v_{12}$. The arguments are as follows: The function
$Y(o)$ satisfies the equations (\ref{wat}) with the S-matrix eigenvalue
$\tilde{S}_{0}^{O(6)}$ of (\ref{SO6}). Therefore the minimal solution is
$Y^{\min}(o)=K(o,1/\nu-2)\sin\frac{1}{2}\pi\nu\left(  o-1\right)  \sin\frac
{1}{2}\pi\nu\left(  o+1\right)  $. Pinching at $w_{1}\rightarrow v_{1}%
-2/\nu,w_{2}\rightarrow v_{2}$ and produces a double pole at $o=1$, wich
implies (\ref{Ynu}).

Now we consider the p-function (\ref{p4s}) in (\ref{X}), then (up to a
constant)%
\[
Y_{p}(o)=K(o,k=1/\nu-2).
\]
This result is obtained by applying to the equations which correspond to
(\ref{x}) and (\ref{v3}) the formula%
\begin{multline*}
\frac{1}{2\pi i}\left(  \int_{\mathcal{C}_{a}}+\int_{\mathcal{C}_{b}}\right)
dz\Gamma(a-z)\Gamma(b-z)\Gamma\left(  c+z\right)  \Gamma\left(  d+z\right)
f(z)\\
=\Gamma\left(  1-c-d-a-b\right)  \Gamma\left(  c+a\right)  \Gamma\left(
d+a\right)  \Gamma\left(  c+b\right)  \Gamma\left(  d+b\right) \\
\times\left(  f(a)\frac{\sin\pi\left(  c+b\right)  \sin\pi\left(  d+b\right)
}{\pi\sin\pi\left(  a-b\right)  }-f(b)\frac{\sin\pi\left(  c+a\right)  \sin
\pi\left(  d+a\right)  }{\pi\sin\pi\left(  a-b\right)  }\right)
\end{multline*}
where $\mathcal{C}_{a}$ encloses the poles of $\Gamma(a-z)$ and $f(z+1)=f(z)$ holds.
\end{proof}

\providecommand{\href}[2]{#2}\begingroup\raggedright\endgroup

\end{document}